\newtheorem{theorem}{Theorem}[section]
\theoremstyle{definition}
\newtheorem{definition}{Definition}
\newtheorem{corollary}{Corollary}
\newtheorem{proposition}{Proposition}
\newtheorem{example}{Example}
\theoremstyle{remark}
\newtheorem{remark}[theorem]{Remark}
\numberwithin{equation}{section}
\begin{document}
	\title{Hulls of Free Linear Codes over a Non-Unital Ring }
	\author{Anup Kushwaha}
    \address{Department of Mathematics, Indian Institute of Technology Patna, Patna-801106}
    \curraddr{}
    \email{E-mail: anup$\textunderscore$2221ma11@iitp.ac.in}
    \thanks{}

   \author{ Om Prakash$^*$}
   \address{Department of Mathematics, Indian Institute of Technology Patna, Patna-801106}
   \curraddr{}
   \email{om@iitp.ac.in}
   \thanks{* Corresponding author}
	
	\subjclass{ 94B05, 16L30}
	
	\keywords{Non-unital rings, Hull, Build-up construction, Hull-variation problem}
	
	\dedicatory{}

\begin{abstract}
This paper investigates the hull codes of free linear codes over a non-unital ring $ E= \langle \kappa,\tau \mid 2 \kappa =2 \tau=0,~ \kappa^2=\kappa,~ \tau^2=\tau,~ \kappa \tau=\kappa,~ \tau \kappa=\tau \rangle$. Initially, we examine the residue and torsion codes of various hulls of $E$-linear codes and obtain an explicit form of the generator matrix of the hull of a free $E$-linear code. Then, we propose four build-up construction methods to construct codes with a larger length and hull-rank from codes with a smaller length and hull-rank. Some illustrative examples are also given to support our build-up construction methods. Subsequently, we study the permutation equivalence of two free $E$-linear codes and discuss the hull-variation problem. As an application, we classify optimal free $E$-linear codes for lengths up to $8$.
\end{abstract}

\maketitle
\section{Introduction}
Let $\mathbb{F}_q$ be the finite field with $q$ elements. A subspace $C$ of $\mathbb{F}_q^n$ refers to an $\mathbb{F}_q$-linear code of length $n$. Under the Euclidean inner product, the dual of the code $C$ is the collection of all vectors of $\mathbb{F}_q^n$ that are orthogonal to $C$ and denoted by $C^\perp$. The hull of the code $C$ is defined as $Hull(C)=C \cap C^\perp$. In fact, the concept of the hull of an $\mathbb{F}_q$-linear code was introduced in 1990 by Assmus and Key \cite{Assmus} to investigate certain properties of finite projective planes. Recall that two codes $C$ and $D$ are permutation-equivalent if $C$ can be derived from $D$ by applying a suitable coordinate permutation. Determining whether two linear codes are permutation-equivalent is a significant problem in coding theory. Recently, hulls have attracted intense research interest due to their various applications. In particular, the dimension of the hull directly influences the complexity of algorithms that check the permutation equivalence or determine the automorphism group of a linear code \cite{Leon,Send1997,Send2000}. Additionally, hulls can be utilized to construct quantum error-correcting codes \cite{Dougherty,Liu2}. These facts highlight the necessity of studying hull codes. Over the years, researchers have investigated the hulls of various codes over finite fields or finite rings with unity. However, there is no work available in the literature on the hull of codes over non-unital rings. Thus, the study of hulls over non-unital rings is a significant problem, and this is a first attempt towards this direction. \par

Recall that an $\mathbb{F}_q$-linear code $C$ is referred to as a linear complementary dual (LCD), self-orthogonal (SO) and self-dual (SD) code if it satisfies $C \cap C^\perp = \{0\}$, $C \subseteq C^\perp$ and $C=C^\perp$, respectively. We see that, if $Hull(C)$ is trivial, it is LCD, if $Hull(C)=C$, it is SO, and if $Hull(C)=C=C^\perp$, it is SD. Thus, the hull code can be seen as a basic generalization of LCD, SO  and SD codes. The concept of LCD codes was introduced by Massey in \cite{Massey92}, where he also proved the existence of asymptotically good LCD codes. In recent years, these codes have attracted considerable interest because of their use in countering Side-Channel Attacks \cite{Carl16}. Additionally, LCD codes have applications in secret sharing schemes \cite{Yadav2025}. Using the concept of hull dimension spectra, Sendrier proved in 2004 that LCD codes attain the Gilbert-Varshamov bound \cite{Send04}. Subsequently, Islam et al. \cite{Islam2022} demonstrated a method to construct both quantum and LCD codes leveraging cyclic codes over non-chain rings. Further, double circulant LCD and SD codes were investigated in \cite{Prakash}. On the other hand, research on SO and SD codes has seen significant growth due to their use in different areas such as quantum stabilizer codes \cite{Calderbank,Steane}, modular forms \cite{Minjia}, lattice theory \cite{Bannai,Conway,Harada}, and combinatorial $t$-design theory \cite{Bachoc}.  \par

Recently, there has been a significant shift towards considering non-unital rings to study different types of codes. In 2021, Alahmadi et al. \cite{Alah21} studied Quasi Type IV codes over a non-unital ring. Then, in 2022, Shi et al. \cite{Shi21} were the first to consider a non-unitary ring with $4$ elements to investigate LCD codes. Following these, Kim and Roe \cite{Kim2022a} investigated quasi-self-dual (QSD) codes over a non-unitary ring with 4 elements. Note that a QSD code $C$ of length $n$ over a finite ring $R$ is an SO code with $|C|=|R|^{n/2}$. Subsequently, authors in \cite{Kushwaha1} extended QSD and LCD codes over a non-unitary ring with $9$ elements. Meanwhile, Deb et al. \cite{Deb}  worked on classifying certain codes over a non-unitary ring of order $4$. This was followed by the work of Kushwaha et al. \cite{Kushwaha2}, who classified MDS and almost MDS LCD and SD codes over a non-unital ring $E_p$. On the other hand, the build-up construction method is a powerful technique to construct codes with a larger length from codes with a smaller length. This technique was utilized in \cite{Alah22} and \cite{Kim2022a} for the classification of Type IV and QSD codes over non-unital rings for some smaller lengths. Further, binary linear codes with various hull dimensions were classified in \cite{Li} using build-up construction techniques. In 2020, Liu and Pan \cite{Liu} studied the hull-variation problem for $\mathbb{F}_q$-linear codes, while LCD and $\mathbb{F}_q$-linear codes with hull-dimension $1$ were investigated in \cite{Wang}. Motivated by these works, we study the hulls of free codes over the non-unital ring $ E= \langle \kappa,\tau \mid 2 \kappa =2 \tau=0,~ \kappa^2=\kappa,~ \tau^2=\tau,~ \kappa \tau=\kappa,~ \tau \kappa=\tau \rangle$. To achieve this, we explore the various hulls of the $E$-linear codes and identify their residue and torsion codes. Then, we produce the form of the generator matrix of the hull code of a free $E$-linear code and also determine its hull-rank. Subsequently, we present four build-up construction methods that utilize free $E$-linear codes with a smaller length and hull-rank to construct free $E$-linear codes with a larger length and hull-rank. We also give some supporting examples of codes constructed by using these build-up construction methods. In addition, we investigate the permutation equivalence of $E$-linear codes and discuss the hull-variation problem for free $E$-linear codes. We conclude our work by classifying optimal free $E$-linear codes for lengths up to $8$. \par

We organize this work in the following manner. Section 2 provides some preliminaries and studies various hulls of an $E$-linear code. It also deals with the generator matrix of the hull code of a free $E$-linear code and its hull-rank. In Section $3$, the four build-up construction methods are presented. Section $4$ investigates the hull-variation problem for free $E$-linear and classifies the optimal free $E$-linear codes for lengths up to $8$. We conclude our work in Section $5$.

\section{Hulls over the ring $E$}
In \cite{Fine93}, Fine has classified rings of order $4$ up to isomorphism. There are only two noncommutative non-unitay rings with $4$ elements, namely, the ring $ E= \langle \kappa,\tau \mid 2 \kappa =2 \tau=0,~ \kappa^2=\kappa,~ \tau^2=\tau,~ \kappa \tau=\kappa,~ \tau \kappa=\tau \rangle$ and its opposite ring. This paper considers the ring $E$ as a code alphabet to study the hull of free linear codes over $E$. One can derive similar results over the opposite ring. The characteristic of the ring $E$ is $2$ and it has $4$ elements $\{ ~i \kappa+j \tau~ |~ 0 \leq i,j <2 ~\}$. We denote $\kappa + \tau=\zeta$ and give the multiplication table of the ring $E$ in Table \ref{Tab1} to characterize its structure.

\begin{table}[ht]
\centering
\caption{\label{Tab1} Multiplication table of $E$.}
\begin{tabular}{|c|c|c|c|c|}
\hline
  $\cdot$ & $0$ & $\kappa$ & $\tau$ &  $\zeta$ \\
 \hline
 $0$ & $0$ & $0$ & $0$ & $0$  \\

\hline
 $\kappa$ & $0$ & $\kappa$ & $\kappa$ &  $0$ \\
 \hline
 $\tau$ & $0$ & $\tau$ & $\tau$ & $0$
 \\
 
 \hline
 $\zeta$ & $0$ & $\zeta$ & $\zeta$ & $0$
 \\
 
\hline
\end{tabular}
\end{table}

The multiplication table shows that the ring $E$ is non-unitary and has a unique maximal ideal $J=\{0, \zeta \}$.  Hence, $E$ is a local ring. Therefore, its residue field is given by $E/J=\mathbb{F}_2$. Further, $x \kappa=x \tau=x$ for all $ x \in E$. Also, every element $x \in E$ has a $\zeta$-adic decomposition as follows:
 $$x=u \kappa+v \zeta ~~ \text{where}~ u,v \in \mathbb{F}_2.$$
 Now, for all $x \in E$ and $v \in \mathbb{F}_2$, an action of $\mathbb{F}_2$ on $E$ can be defined as $xv=vx$. We see that this action is distributive, i.e., $x(u\oplus_{\mathbb{F}_2} v)=xu+xv=ux+vx$ for all $ x \in E$ and $u,v \in \mathbb{F}_2$. Now, if an element of $E$ is written in the form of $\zeta$-adic decomposition, then the map of reduction modulo $J$ is a  map $\pi:E\rightarrow E/J =\mathbb{F}_2$ defined by
 $$\pi(x)=\pi(u\kappa+v\zeta)=u.$$
 There is a natural extension of this map from $E^n$ to  $\mathbb{F}_2^n$.

 \begin{remark}
        In this paper, we fix the notations $\kappa$ and $\tau$ for the generators of the ring $E$ that satisfy the conditions $ 2 \kappa =2 \tau=0,~ \kappa^2=\kappa,~ \tau^2=\tau,~ \kappa \tau=\kappa,~ \tau \kappa=\tau $. Moreover,  $\zeta=\kappa+ \tau.$
        
     \end{remark}

\begin{definition}[Linear code]\label{def1}
     An $E$-linear code of length $n$ refers to a left $E$-submodule of $E^n$.
\end{definition}

\begin{definition}[Minimum distance]
The number of differing components between two codewords $w$ and $z$ is called the distance between them and is denoted by $d(w,z)$. Then, the minimum distance of the code $C$ is
$$d(C)=\text{min}\{d(w,z)~|~ w,z \in C, w \neq z\}.$$
 
\end{definition}

  Here, given an $E$-linear code $C$ of length $n$, the following two $\mathbb{F}_2$-linear codes related to $C$ are defined.
\begin{enumerate}
\item \textbf{Residue code:} For an $E$-linear code $C$, $\pi(C)$ is called its residue code and denoted by $C_{Res}$, i.e., 
    $$ C_{Res}=\{\pi(z)~ |~ z \in C \}.$$
    \item \textbf{Torsion code:} For an $E$-linear code $C$, its torsion code is an $\mathbb{F}_2$-linear code given by
    $$ C_{Tor}=\{ v \in \mathbb{F}_{2}^{n} ~ |~ v \zeta \in C\}.$$
    
    \end{enumerate}

    Now, we define the inner product of any vectors  $w=(w_1, w_2, \ldots, w_n)$, and $z=(z_1,z_2,\ldots,z_n)$ in $E^n$ as
    $$ \langle w, z  \rangle=\sum_{j=1}^{n}w_j z_j.$$
    
   Additionally, the duals of an $E$-linear code $C$ are introduced below, defined under the inner product given above.
    \begin{enumerate}
     \item \textbf{Left dual:} The $E$-linear code given by 
    $$C^{\perp_L}= \{ \boldsymbol{z} \in E^n~ |~\langle \boldsymbol{z}, \boldsymbol{w}  \rangle= 0,\forall ~\boldsymbol{w}\in C\},$$
is said to be the left dual of the code $C$.
    \item \textbf{Right dual:} The $E$-linear code given by 
        $$C^{\perp_R}= \{ \boldsymbol{z }\in E^n~ |~\langle \boldsymbol{w}, \boldsymbol{z}  \rangle= 0,\forall ~\boldsymbol{w} \in C\},$$
        is defined as the right dual of the code $C$.
        
        \item \textbf{Dual:} The intersection of left and right duals of an $E$-linear code $C$ is defined as the dual of the code $C$ and denoted by $C^\perp$, i.e., $C^\perp=C^{\perp_L} \cap C^{\perp_R}$.
    \end{enumerate}

\begin{definition}[Left hull, right hull and hull codes]
  For an $E$-linear code $C$, its left hull, right hull and hull codes are defined as $LHull(C)=C\cap C^{\perp_L}$, $RHull(C)=C\cap C^{\perp_R}$ and $Hull(C)=C\cap C^{\perp}$, respectively.  
\end{definition}

\begin{definition}[Generating set, generator matrix and parity-check matrix] For an $E$-linear code $C$, let  $X=\{  x_1,x_2,\ldots , x_k \} \subset C$. Then the set $$\langle X \rangle_{E} = \{ a_1x_1+a_2x_2+ \cdots + a_kx_k ~|~ a_i \in E, ~  1 \leq i \leq k \},$$
 is called the (left) $E$-span of $X$. Next, 
 the set given by
$$\langle X \rangle_{\mathbb{F}_2} = \{ v_1x_1+v_2x_2+ \cdots + v_kx_k ~|~ v_i \in \mathbb{F}_2, ~ 1 \leq i \leq k  \},$$
is called the additive span of the set $X$.
Since the ring $E$ is non-unitary, the additive span of the set $X$ is not always contained in the $E$-span of $X$. Now, if the set $X=\{ x_1,x_2,\ldots , x_k \}$ satisfies
    $$\langle X \rangle_{E} \cup \langle X \rangle_{\mathbb{F}_2}=C,$$
    it is termed as a generating set for the code $C$. Next, if $X=\{ x_1,x_2,\ldots, x_k \}\subset C$ is a generating set of an $E$-linear code $C$ of length $n$, then a $k \times n$ matrix $G_{E}$ whose rows are $x_1,x_2,\ldots, x_k$ and $\langle G \rangle_{E}=\langle X \rangle_{E} \cup \langle X \rangle_{\mathbb{F}_2}$ refers to a generator matrix  of $C$. Further, a generator matrix of $C^\perp$ is termed the parity-check matrix of $C$.

\end{definition}

\begin{definition}[Free code] If an $E$-linear code $C$ can be decomposes as 
$$C=E\oplus E\oplus \cdots \oplus E,$$ where each component $E$ is generated by some $z_i \in E$ as an $E$-module, it is called free. Note that the freeness of the $E$-linear code $C$ is equivalent to the condition $C_{Res} = C_{Tor}$.
\end{definition}

\begin{definition}[Rank of a free code]
   For a free $E$-linear code $C$, its rank, denoted by $rank(C)$, is the size of its minimal generating set. 
\end{definition}

 We now state a few basic results derived from Lemma $1$, Theorem $9$ and Theorem $10$ of \cite{Alah23}.

\begin{theorem}\label{Thm1b}
If $C$ is an $E$-linear code of length $n$, then
    
\begin{enumerate}
    \item $\kappa C_{Res} \subseteq C$ and $\zeta C_{Tor} \subseteq C$,
    \item $(C^{\perp_L})_{Res}=(C_{Res})^\perp=(C^{\perp_L})_{Tor}$,
    \item $(C^{\perp_R})_{Res}=(C_{Tor})^\perp$ and $(C^{\perp_R})_{Tor}=\mathbb{F}_2^n$,
    \item $(C^{\perp})_{Res}=(C_{Tor})^\perp$ and $(C^{\perp})_{Tor}=(C_{Res})^\perp$.

    \end{enumerate}
\end{theorem}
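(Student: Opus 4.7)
The plan is to reduce everything to $\mathbb{F}_2$-linear algebra via the $\zeta$-adic decomposition. Every $z \in E^n$ writes uniquely as $z = a\kappa + b\zeta$ with $a, b \in \mathbb{F}_2^n$, so $\pi(z) = a$. Using the multiplication table together with $\kappa\zeta = 0$ and $\zeta^2 = 0$, a term-by-term calculation gives, for $z = a\kappa + b\zeta$ and $w = c\kappa + d\zeta$, the identity
\[ \langle z, w\rangle \;=\; \langle a, c\rangle\, \kappa \;+\; \langle b, c\rangle\, \zeta, \]
where the inner products on the right are taken in $\mathbb{F}_2^n$. Hence $\langle z, w\rangle = 0$ iff $\langle a,c\rangle = \langle b, c\rangle = 0$ over $\mathbb{F}_2$. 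Notice that only the $\kappa$-component of the second argument contributes, which is exactly the source of the asymmetry between left and right duals.

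The second ingredient, to be recorded before the four parts, is a structural consequence of $C$ being a left $E$-module. For any $w = c\kappa + d\zeta \in C$, the ring identities yield $\kappa w = c\kappa \in C$ and $\zeta w = c\zeta \in C$. This gives part (1) directly, together with the inclusion $C_{Res} \subseteq C_{Tor}$ and the key refinement that for every $c \in C_{Res}$ one in fact has $c\kappa \in C$. Consequently, subtracting off the $\kappa$-part of any codeword, the collection of $\zeta$-parts $\{\, d \,:\, \exists\, c \text{ with } c\kappa + d\zeta \in C\,\}$ is precisely $C_{Tor}$.

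With these two facts, each of (2), (3), (4) becomes bookkeeping. For (2), $z = a\kappa + b\zeta \in C^{\perp_L}$ translates by the inner-product formula into $a, b \in (C_{Res})^\perp$, so both the residue and torsion codes of $C^{\perp_L}$ are $(C_{Res})^\perp$. For (3), applying the formula to $\langle w, z\rangle$ instead yields the conditions $\langle c, a \rangle = 0$ for all $c \in C_{Res}$ and $\langle d, a \rangle = 0$ for all admissible $d$'s; by the second ingredient the admissible $d$'s exhaust $C_{Tor}$, and since $C_{Res} \subseteq C_{Tor}$ the combined requirement collapses to $a \in (C_{Tor})^\perp$ with $b$ completely free, giving $(C^{\perp_R})_{Res} = (C_{Tor})^\perp$ and $(C^{\perp_R})_{Tor} = \mathbb{F}_2^n$. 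Part (4) is the intersection of the two conditions: $z \in C^\perp$ iff $a \in (C_{Tor})^\perp$ and $b \in (C_{Res})^\perp$.

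The only delicate step is inside (3), where one must use the left $E$-module closure from (1) to see that \emph{every} $d \in C_{Tor}$ actually arises as the $\zeta$-part of some codeword. Without this observation one would only obtain the weaker inclusion $(C^{\perp_R})_{Res} \supseteq (C_{Tor})^\perp$; the somewhat hidden role of part (1) is what upgrades it to equality and then propagates through to part (4). Everything else is straightforward manipulation in the $(a,b)$-decomposition.
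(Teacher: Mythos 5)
Your proof is correct. There is, however, no in-paper argument to compare it against: the paper simply imports these four identities from Lemma 1 and Theorems 9--10 of \cite{Alah23} ("We now state a few basic results derived from\dots"), so your write-up is a genuine self-contained derivation rather than a variant of the authors' proof. The derivation is sound: the product formula $(a\kappa+b\zeta)(c\kappa+d\zeta)=ac\,\kappa+bc\,\zeta$ follows from $\kappa^{2}=\kappa$, $\zeta\kappa=\zeta$ and $\kappa\zeta=\zeta^{2}=0$, and it correctly isolates the source of the left/right asymmetry (only the $\kappa$-component of the right-hand factor survives). You also correctly flag the one step that is not pure bookkeeping, namely that in part (3) the set of $\zeta$-parts of codewords must be shown to be all of $C_{Tor}$; your argument for this --- subtract $\kappa\pi(w)\in C$, which lies in $C$ by part (1), from $w\in C$ to leave $d\zeta\in C$ --- is exactly what is needed to turn $\supseteq$ into equality. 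Two cosmetic remarks: the second half of (1), $\zeta C_{Tor}\subseteq C$, is immediate from the definition of the torsion code, so the real content your module computation adds there is the inclusion $C_{Res}\subseteq C_{Tor}$, which you then use to collapse $(C_{Res})^\perp\cap(C_{Tor})^\perp$ to $(C_{Tor})^\perp$ in parts (3) and (4); and in the final identification of the torsion codes of $C^{\perp_L}$, $C^{\perp_R}$, $C^{\perp}$ you are implicitly using the uniqueness of the $\zeta$-adic decomposition to read off that $v\zeta=0\cdot\kappa+v\zeta$ belongs to a set of the form $\{a\kappa+b\zeta : a\in A,\ b\in B\}$ exactly when $v\in B$ --- worth one sentence, but not a gap.
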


 Now, we have the following result to check the freeness of the three duals of an $E$-linear code.

 \begin{theorem}
    If $C$ is an $E$-linear code, then $C^{\perp_L}$ is always free but $C^{\perp_R}$ is never free. Further, $C^{\perp}$ is free if and only if $C$ is free.
 \end{theorem}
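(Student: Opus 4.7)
The plan is to exploit the freeness criterion given in the preceding definition: an $E$-linear code $D$ is free iff $D_{Res} = D_{Tor}$. Thus for each of the three duals I would simply read off $(\cdot)_{Res}$ and $(\cdot)_{Tor}$ from Theorem \ref{Thm1b} and compare the two $\mathbb{F}_2$-subspaces of $\mathbb{F}_2^n$ they produce.

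For the left dual, part (2) of Theorem \ref{Thm1b} already supplies the double equality $(C^{\perp_L})_{Res} = (C_{Res})^\perp = (C^{\perp_L})_{Tor}$, so the freeness criterion holds with no case analysis on $C$ whatsoever.

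For the right dual, part (3) of Theorem \ref{Thm1b} gives $(C^{\perp_R})_{Tor} = \mathbb{F}_2^n$ whereas $(C^{\perp_R})_{Res} = (C_{Tor})^\perp$. These coincide iff $C_{Tor} = \{0\}$. Using the $\zeta$-adic decomposition one checks that any $c \in C$ with $\pi(c) = 0$ has the form $v\zeta$ with $v \in C_{Tor}$, so $C_{Tor} = \{0\}$ in fact forces $C = \{0\}$. Hence for every nontrivial code $C$, the right dual fails to be free.

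For the two-sided dual, part (4) of Theorem \ref{Thm1b} gives $(C^\perp)_{Res} = (C_{Tor})^\perp$ and $(C^\perp)_{Tor} = (C_{Res})^\perp$. Since $\perp$ is an involution on $\mathbb{F}_2$-subspaces of $\mathbb{F}_2^n$, the equality $(C_{Tor})^\perp = (C_{Res})^\perp$ holds iff $C_{Tor} = C_{Res}$, which is precisely the freeness of $C$. I do not anticipate any serious obstacle: the whole argument is essentially a direct reading of Theorem \ref{Thm1b} through the freeness criterion. The only delicate point is the degenerate case $C = \{0\}$ for the right dual, where $C^{\perp_R} = E^n$ is free; the statement should therefore be understood as applying to nonzero codes.
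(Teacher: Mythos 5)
Your proposal is correct, and for the two-sided dual it coincides with the paper's argument word for word: read off $(C^\perp)_{Res}=(C_{Tor})^\perp$ and $(C^\perp)_{Tor}=(C_{Res})^\perp$ from Theorem \ref{Thm1b}(4) and note that these agree iff $C_{Res}=C_{Tor}$. Where you diverge is on the first two claims: the paper does not prove them at all but simply cites Lemma $5$ of \cite{Shi21}, whereas you derive both directly from parts (2) and (3) of Theorem \ref{Thm1b} together with the criterion ``free iff residue equals torsion.'' This makes your proof more self-contained and, as a bonus, it surfaces a point the paper glosses over: $(C^{\perp_R})_{Res}=(C_{Tor})^\perp$ equals $(C^{\perp_R})_{Tor}=\mathbb{F}_2^n$ exactly when $C_{Tor}=\{0\}$, and your argument that $C_{Tor}=\{0\}$ forces $C=\{0\}$ is sound (every codeword with zero residue is of the form $v\zeta$ with $v\in C_{Tor}$, and $C_{Res}\subseteq C_{Tor}$ kills the residues). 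So the literal statement ``$C^{\perp_R}$ is never free'' fails for the zero code, where $C^{\perp_R}=E^n$ satisfies $Res=Tor=\mathbb{F}_2^n$; your restriction to nonzero codes is the right reading, and it is a small but genuine sharpening of what the paper records.
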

\begin{proof}
  Freeness of $C^{\perp_L}$ and non-freeness of $C^{\perp_R}$ follow the Lemma $5$ of \cite{Shi21}. To prove the necessity of the second result, i.e., for the free $E$-linear code $C$, the result (4) of Theorem \ref{Thm1b} implies that $$(C^\perp)_{Res}=(C_{Res})^\perp=(C^\perp)_{Tor}.$$
  Therefore, $C^\perp$ is free. On the other hand, assume $C^\perp$ is a free $E$-linear code. Since $(C^\perp)_{Res}=(C^\perp)_{Tor}$, again (4) of Theorem \ref{Thm1b} implies that  $(C_{Res})^\perp=(C_{Tor})^\perp$. Thus, $C_{Res}=C_{Tor}$, and so $C$ is free.
\end{proof}

The following result calculates the residue and torsion codes of the three hull codes of an $E$-linear code.

\begin{theorem}\label{Thm2}
    Let $C$ be an $E$-linear code. Then
\begin{enumerate}
    \item $(LHull(C))_{Res}=C_{Res}\cap (C_{Res})^\perp$ and $(LHull(C))_{Tor}=(C_{Res})^\perp \cap C_{Tor}$,
    \item $(RHull(C))_{Res}=C_{Res} \cap (C_{Tor})^\perp$ and $(RHull(C))_{Tor}=C_{Tor}$,
    \item $(Hull(C))_{Res}=C_{Res} \cap (C_{Tor})^\perp$ and $(Hull(C))_{Tor}=(C_{Res})^\perp \cap C_{Tor}$.
\end{enumerate}
\end{theorem}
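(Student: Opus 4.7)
The plan is to reduce all six identities to a single meta-lemma: that both the residue and the torsion operations commute with intersection of $E$-linear codes. Once that meta-lemma is established, each of the six formulas follows immediately by substituting the known residues/torsions of $C^{\perp_L}$, $C^{\perp_R}$, and $C^{\perp}$ from Theorem \ref{Thm1b}.

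First I would prove the auxiliary statement: for $E$-linear codes $C_1, C_2 \subseteq E^n$,
\begin{equation*}
(C_1 \cap C_2)_{Res} = (C_1)_{Res} \cap (C_2)_{Res}, \qquad (C_1 \cap C_2)_{Tor} = (C_1)_{Tor} \cap (C_2)_{Tor}.
\end{equation*}
The torsion identity is immediate from the definition: $v\zeta \in C_1 \cap C_2$ if and only if $v\zeta \in C_1$ and $v\zeta \in C_2$. For the residue identity, the containment $\subseteq$ is clear from the definition of $\pi$. For the reverse containment, take $u \in (C_1)_{Res} \cap (C_2)_{Res}$. By statement (1) of Theorem \ref{Thm1b}, $\kappa u \in C_1$ and $\kappa u \in C_2$, hence $\kappa u \in C_1 \cap C_2$; since $\pi(\kappa u) = u$, we conclude $u \in (C_1 \cap C_2)_{Res}$. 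This is the one step that actually uses non-trivial structure of $E$ (namely the inclusion $\kappa C_{Res} \subseteq C$), and it is essentially the only place where care is needed — everywhere else the argument is formal.

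With the meta-lemma in hand, I would then dispatch the three parts in sequence. For $LHull(C) = C \cap C^{\perp_L}$, apply the meta-lemma and use $(C^{\perp_L})_{Res} = (C^{\perp_L})_{Tor} = (C_{Res})^\perp$ from Theorem \ref{Thm1b}(2), yielding
\begin{equation*}
(LHull(C))_{Res} = C_{Res} \cap (C_{Res})^\perp, \qquad (LHull(C))_{Tor} = C_{Tor} \cap (C_{Res})^\perp.
\end{equation*}
For $RHull(C) = C \cap C^{\perp_R}$, use Theorem \ref{Thm1b}(3): $(C^{\perp_R})_{Res} = (C_{Tor})^\perp$ and $(C^{\perp_R})_{Tor} = \mathbb{F}_2^n$; the latter causes the torsion of $RHull(C)$ to collapse to $C_{Tor}$ itself. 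For $Hull(C) = C \cap C^\perp$, use Theorem \ref{Thm1b}(4) to get the stated formulas.

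The main (and essentially only) obstacle is the residue-commutes-with-intersection step, because one could worry that a representative of $u$ in $C_1$ might be incompatible with a representative of $u$ in $C_2$ modulo $J$. The trick is to bypass this by producing the single canonical representative $\kappa u$ that simultaneously lies in both codes, which is exactly what statement (1) of Theorem \ref{Thm1b} guarantees. Everything after that is bookkeeping.
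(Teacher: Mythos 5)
Your proposal is correct and follows essentially the same route as the paper: the paper's proof is precisely your meta-lemma ($(C_1\cap C_2)_{Res}=(C_1)_{Res}\cap(C_2)_{Res}$ and likewise for $Tor$, established via the canonical representatives $\kappa u$ and $\zeta v$ supplied by Theorem \ref{Thm1b}(1)) carried out inline three times, once for each hull, followed by the same substitutions from Theorem \ref{Thm1b}(2)--(4). Factoring the common element-chase into a single intersection lemma is a cleaner organization of the identical argument, not a different one.
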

\begin{proof}  \begin{enumerate}
    \item Let $u \in (LHull(C))_{Res}$. Then, by (1) of Theorem \ref{Thm1b}, $\kappa u \in C \cap C^{\perp_L}$ which implies that $\kappa u \in C$ and $\kappa u \in C^{\perp_L}$. Since $\pi(\kappa u)=u$ and $(C^{\perp_L})_{Res}=(C_{Res})^\perp$ by (2) of Theorem \ref{Thm1b}, we have  $u \in C_{Res}$ and $u \in (C_{Res})^\perp$. Hence, $u \in C_{Res} \cap (C_{Res})^\perp$. Therefore, $(C \cap C^{\perp_L})_{Res} \subseteq C_{Res} \cap (C_{Res})^\perp$. For the converse inclusion, let $v \in C_{Res} \cap (C_{Res})^\perp$. This implies that $v \in C_{Res}$ and $v \in  (C_{Res})^\perp= (C^{\perp_L})_{Res}$. Then, by (1) of Theorem \ref{Thm1b}, $\kappa v \in C$ and $\kappa v \in C^{\perp_L}$. Hence, $\kappa v \in C \cap C^{\perp_L}$. Since $\pi(\kappa v)=v$, we have $v \in (C \cap C^{\perp_L})_{Res}$. Therefore, $C_{Res} \cap (C_{Res})^\perp \subseteq (C \cap C^{\perp_L})_{Res}$. Thus, $(LHull(C))_{Res}=C_{Res}\cap (C_{Res})^\perp$.\par
   
    On the other hand, let $ u \in (LHull(C))_{Tor}$. Then, by (1) of Theorem \ref{Thm1b}, $\zeta u \in C \cap C^{\perp_L}$. This implies that $\zeta u \in C $ and $\zeta u \in C^{\perp_L}$. Hence, $u \in C_{Tor}$ and $u \in (C^{\perp_L})_{Tor}$. Also, by (2) of Theorem \ref{Thm1b}, $u \in C_{Tor} \cap (C_{Res})^\perp$. Therefore, $(LHull(C))_{Tor} \subseteq C_{Tor} \cap (C_{Res})^\perp$. For the converse inclusion, let $v \in C_{Tor} \cap (C_{Res})^\perp$. Then, by (2) of Theorem \ref{Thm1b}, $v \in C_{Tor}$ and $v \in (C^{\perp_L})_{Tor}$. Also, by (1) of Theorem \ref{Thm1b}, $\zeta v \in C$ and $\zeta v \in C^{\perp_L}$. Consequently, $\zeta v \in C \cap C^{\perp_L}$. Hence, $v \in (C \cap C^{\perp_L})_{Tor}$. Therefore, $C_{Tor} \cap (C_{Res})^\perp \subseteq (C \cap C^{\perp_L})_{Tor}$. Thus, $(LHull(C))_{Tor}=(C_{Res})^\perp \cap C_{Tor}$.

      \item The residue code of the right hull can be obtained by the following procedure, analogous to that used for deriving the residue code of the left hull. To derive the torsion code of the right hull, let $ u \in (RHull(C))_{Tor}$. Then, by (1) of Theorem \ref{Thm1b}, $\zeta u \in C \cap C^{\perp_R}$. This implies that $\zeta u \in C $ and $\zeta u \in C^{\perp_R}$. Hence, $u \in C_{Tor}$ and $u \in (C^{\perp_R})_{Tor}$. Also, (3) of Theorem \ref{Thm1b} implies that $u \in C_{Tor} \cap \mathbb{F}_2^n=C_{Tor}$. Therefore, $(RHull(C))_{Tor} \subseteq C_{Tor}$. For the converse inclusion, let $v \in C_{Tor}$. Since  $v \in \mathbb{F}_2^n$, by (3) of Theorem \ref{Thm1b}, $v \in (C^{\perp_R})_{Tor}$. Also, by (1) of Theorem \ref{Thm1b}, $\zeta v \in C$ and $\zeta v \in C^{\perp_R}$. Consequently, $\zeta v \in C \cap C^{\perp_R}$. Hence, $v \in (C \cap C^{\perp_R})_{Tor}$. Therefore, $C_{Tor}  \subseteq (C \cap C^{\perp_R})_{Tor}$. Thus, $(RHull(C))_{Tor}= C_{Tor}$.
     
     \item This part follows a similar procedure to the above two parts.

\end{enumerate}
\end{proof} 

An important consequence of the above theorem is the following corollary.
\begin{corollary}\label{Cor1}
   If $C$ is an $E$-linear code, then
   $$(Hull(C))_{Res} \subseteq Hull(C_{Res}) ~\text{and}~~ (Hull(C))_{Tor} \supseteq Hull(C_{Tor}).$$ 
Furthermore, if $C$ is free, equality holds.
\end{corollary}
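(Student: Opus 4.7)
The plan is to reduce the corollary to the explicit formulas already established in Theorem \ref{Thm2}, using the single auxiliary fact that $C_{Res}\subseteq C_{Tor}$ for every $E$-linear code $C$.

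First I would record the three identities I need: by part (3) of Theorem \ref{Thm2},
\begin{equation*}
(Hull(C))_{Res}=C_{Res}\cap (C_{Tor})^{\perp},\qquad (Hull(C))_{Tor}=(C_{Res})^{\perp}\cap C_{Tor},
\end{equation*}
while by the definition of the (Euclidean) hull over $\mathbb{F}_2$,
\begin{equation*}
Hull(C_{Res})=C_{Res}\cap (C_{Res})^{\perp},\qquad Hull(C_{Tor})=C_{Tor}\cap (C_{Tor})^{\perp}.
\end{equation*}

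Next I would verify $C_{Res}\subseteq C_{Tor}$. Given $u\in C_{Res}$, part (1) of Theorem \ref{Thm1b} gives $\kappa u\in C$; multiplying on the left by $\zeta$ and using $\zeta\kappa=\zeta$ from the multiplication table yields $\zeta u=\zeta(\kappa u)\in C$, whence $u\in C_{Tor}$. (This containment is in fact implicit in the characterization that freeness is equivalent to $C_{Res}=C_{Tor}$.) Dualizing inclusion reverses it, so $(C_{Tor})^{\perp}\subseteq (C_{Res})^{\perp}$.

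Now both inclusions fall out by an immediate set-theoretic manipulation. Intersecting $(C_{Tor})^{\perp}\subseteq (C_{Res})^{\perp}$ with $C_{Res}$ gives
\begin{equation*}
(Hull(C))_{Res}=C_{Res}\cap (C_{Tor})^{\perp}\;\subseteq\;C_{Res}\cap (C_{Res})^{\perp}=Hull(C_{Res}),
\end{equation*}
and intersecting the same inclusion with $C_{Tor}$ (on the opposite side) gives
\begin{equation*}
Hull(C_{Tor})=C_{Tor}\cap (C_{Tor})^{\perp}\;\subseteq\;C_{Tor}\cap (C_{Res})^{\perp}=(Hull(C))_{Tor}.
\end{equation*}
Finally, if $C$ is free then $C_{Res}=C_{Tor}$, so $(C_{Res})^{\perp}=(C_{Tor})^{\perp}$ and both displayed inclusions become equalities.

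There is no real obstacle here: the whole argument is a bookkeeping exercise once Theorem \ref{Thm2} is in hand, and the only nontrivial input is the one-line verification that $C_{Res}\subseteq C_{Tor}$, which I would present explicitly since it is used in both directions.
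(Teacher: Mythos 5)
Your proof is correct and follows essentially the same route as the paper: both reduce the statement to the formulas of Theorem \ref{Thm2}(3) and the containment $C_{Res}\subseteq C_{Tor}$, whose dual reverses the inclusion. The only difference is that you explicitly justify $C_{Res}\subseteq C_{Tor}$ (via $\zeta(\kappa u)=\zeta u$), whereas the paper simply asserts it from linearity; this is a welcome addition but not a change of method.
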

\begin{proof}
   For an $E$-linear code  $C$, by $(3)$ of Theorem \ref{Thm2}, we have 
   $$(Hull(C))_{Res}=C_{Res} \cap (C_{Tor})^\perp ~~\text{and}~~ (Hull(C))_{Tor}=(C_{Res})^\perp \cap C_{Tor}.$$
   Since $C$ is linear, $C_{Res} \subseteq C_{Tor}$. This implies that $(C_{Tor})^\perp \subseteq (C_{Res})^\perp$. Hence, $C_{Res} \cap (C_{Tor})^\perp \subseteq C_{Res} \cap (C_{Res})^\perp$. Therefore, $(Hull(C))_{Res} \subseteq Hull(C_{Res})$. Next, $C_{Tor} \supseteq C_{Res}$ implies that $(C_{Res})^\perp \cap C_{Tor} \supseteq (C_{Res})^\perp \cap C_{Res}$. Therefore, $(Hull(C))_{Tor} \supseteq Hull(C_{Tor})$. On the other hand, if $C$ is free, then $C_{Res}=C_{Tor}$ implies that  
$$(Hull(C))_{Res}=C_{Res} \cap (C_{Tor})^\perp=C_{Res} \cap (C_{Res})^\perp=Hull(C_{Res}),$$
and 
     $$(Hull(C))_{Tor}=(C_{Res})^\perp \cap C_{Tor}=(C_{Tor})^\perp \cap C_{Tor}=Hull(C_{Tor}).$$
   
\end{proof}

We know that if $C$ is an $\mathbb{F}_q$-linear code, then $(C^\perp)^\perp=C$. Consequently, we have the following results for the three duals of an $E$-linear code.

\begin{theorem}\label{Thm4e}
   An $E$-linear code $C$ of length $n$ satisfies the following duality properties:
    \begin{enumerate}
        \item $(C^{\perp_L})^{\perp_L}=C$ if $C$ is free; 
        \item $(C^{\perp_R})^{\perp_R}=C$ if  $C_{Res}=\{0\}$ and $C_{Tor}=\mathbb{F}_2^n$; 
        \item $(C^\perp)^\perp=C$ always;
        \item $(C^{\perp_L})^{\perp_R}=C$ if $C_{Tor}=\mathbb{F}_2^n$;
        \item $(C^{\perp_R})^{\perp_L}=C$ if $C$ is free.
        \end{enumerate}
\end{theorem}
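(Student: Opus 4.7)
The plan is to prove each of the five identities by a uniform three-step recipe: first compute the residue and torsion codes of the double dual by iterating Theorem \ref{Thm1b}; second verify the containment of $C$ in the double dual directly from the inner-product definitions; and finally close with a size comparison using the standard cardinality formula $|C|=|C_{Res}|\cdot|C_{Tor}|$, which follows from the surjection $C\to C_{Res}$ induced by $\pi$, whose kernel has size $|C_{Tor}|$.

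For the residue and torsion step, each part reduces to a short chain of applications of Theorem \ref{Thm1b} combined with the classical duality $(A^\perp)^\perp=A$ in $\mathbb{F}_2^n$. Iterating part (2) twice gives $((C^{\perp_L})^{\perp_L})_{Res}=((C^{\perp_L})^{\perp_L})_{Tor}=C_{Res}$, matching $C$ when $C$ is free. Iterating part (4) twice gives $((C^{\perp})^{\perp})_{Res}=C_{Res}$ and $((C^{\perp})^{\perp})_{Tor}=C_{Tor}$ with no hypothesis. Combining part (3) with part (2) gives $((C^{\perp_R})^{\perp_L})_{Res}=((C^{\perp_R})^{\perp_L})_{Tor}=C_{Tor}$, matching $C$ under freeness. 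Combining part (2) with part (3) gives $((C^{\perp_L})^{\perp_R})_{Res}=C_{Res}$ and $((C^{\perp_L})^{\perp_R})_{Tor}=\mathbb{F}_2^n$, so the hypothesis $C_{Tor}=\mathbb{F}_2^n$ is exactly what makes the torsion match. Finally, iterating part (3) gives $((C^{\perp_R})^{\perp_R})_{Res}=\{0\}$ and $((C^{\perp_R})^{\perp_R})_{Tor}=\mathbb{F}_2^n$, and any $E$-linear code with these residue and torsion codes must equal $\zeta\mathbb{F}_2^n$, so under the hypothesis of (2) both $C$ and the double right-dual literally coincide with this code.

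The containment $C\subseteq$ (double dual) is routine but requires care with the noncommutative pairing. For (1), $c\in C$ and $z\in C^{\perp_L}$ satisfy $\langle z,c\rangle=0$, placing $c$ in $(C^{\perp_L})^{\perp_L}$. For (5), $c\in C$ and $d\in C^{\perp_R}$ satisfy $\langle c,d\rangle=0$, placing $c$ in $(C^{\perp_R})^{\perp_L}$. Cases (3) and (4) are analogous, and for (2) the hypothesis forces every $c\in C$ to have entries in $\{0,\zeta\}$, so $d_j c_j=0$ componentwise for any $d\in C^{\perp_R}$.

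The main obstacle is the last step: matching residue and torsion codes does not in general force equality of $E$-linear codes. The size formula resolves this in each case, since the residue and torsion computations above pin down the cardinality of the double dual, and this cardinality matches $|C|=|C_{Res}|\cdot|C_{Tor}|$ under the stated hypotheses. Combined with the containment, equality follows. Part (2) is the cleanest case, since both sides equal $\zeta\mathbb{F}_2^n$ directly.
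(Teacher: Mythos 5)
Your residue/torsion computations coincide with the paper's proof, and your cardinality formula $|D|=|D_{Res}|\cdot|D_{Tor}|$ is correct. But there is a genuine error in the containment step for part (1). You claim that $\langle z,c\rangle=0$ for all $z\in C^{\perp_L}$ places $c$ in $(C^{\perp_L})^{\perp_L}$. It does not: by the paper's definitions, membership of $c$ in $(C^{\perp_L})^{\perp_L}$ requires $\langle c,z\rangle=0$ for all $z\in C^{\perp_L}$, whereas what you have from $z\in C^{\perp_L}$ is the pairing in the opposite order; since $E$ is noncommutative these are different conditions, and your argument actually shows $c\in(C^{\perp_L})^{\perp_R}$. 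Moreover the containment $C\subseteq (C^{\perp_L})^{\perp_L}$ is genuinely false without freeness: for $n=1$ and $C=\{0,\zeta\}$ one has $C^{\perp_L}=E$ (because $x\zeta=0$ for all $x$) and hence $(C^{\perp_L})^{\perp_L}=\{0\}\not\supseteq C$. So this step cannot be "routine from the inner-product definitions"; it must use the hypothesis. (A direct computation with $c=\kappa u+\zeta v$, $z=\kappa a+\zeta b$ gives $\langle c,z\rangle=\langle a,u\rangle\kappa+\langle a,v\rangle\zeta$ with $a\in(C_{Res})^\perp$, $v\in C_{Tor}$, which vanishes precisely when $(C_{Res})^\perp\perp C_{Tor}$, i.e.\ when $C_{Tor}\subseteq C_{Res}$ — exactly freeness.) Your containments for (2)--(5) are fine.

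A secondary remark: the obstacle you identify at the end — that matching residue and torsion codes "does not in general force equality of $E$-linear codes" — is not actually an obstacle for this ring. Every $E$-linear code $D$ satisfies $D=\kappa D_{Res}+\zeta D_{Tor}$: one inclusion is Theorem~\ref{Thm1b}(1) plus additive closure, and the other follows from writing $c=\kappa u+\zeta v$ and noting $\kappa c=\kappa\pi(c)\in D$, hence $\zeta v=c+\kappa c\in D$. So two codes with the same residue and torsion codes are equal, which is precisely the principle the paper's proof uses implicitly and which would let you delete the containment and cardinality steps entirely (and in particular repair part (1)). As written, though, part (1) of your argument has a step that fails.
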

\begin{proof}
    \begin{enumerate}
        \item If $C$ is an $E$-linear code, then by (2) of Theorem \ref{Thm1b}, we have 
        $$((C^{\perp_L})^{\perp_L})_{Res}=((C^{\perp_L})_{Res})^\perp=((C_{Res})^\perp)^\perp=C_{Res},$$
        and 
        $$((C^{\perp_L})^{\perp_L})_{Tor}=((C^{\perp_L})_{Res})^\perp=((C_{Res})^\perp)^\perp=C_{Res}.$$
       If $C$ is free, then $C_{Res}=C_{Tor}$ implies that $((C^{\perp_L})^{\perp_L})_{Res}= C_{Res}$ and $((C^{\perp_L})^{\perp_L})_{Tor}=C_{Tor}$. Therefore, $(C^{\perp_L})^{\perp_L}=C$. This completes the proof.

        \item If $C$ is an $E$-linear code, then by (3) of Theorem \ref{Thm1b}, we have 
        $$((C^{\perp_R})^{\perp_R})_{Res}=((C^{\perp_R})_{Tor})^\perp=(\mathbb{F}_2^n)^\perp=\{0\} ~~\text{and}~~ ((C^{\perp_R})^{\perp_R})_{Tor}=\mathbb{F}_2^n.$$
        We know that $(C^{\perp_R})^{\perp_R}=C$ if $((C^{\perp_R})^{\perp_R})_{Res}= C_{Res}$ and $((C^{\perp_R})^{\perp_R})_{Tor}=C_{Tor}$. Therefore, $(C^{\perp_R})^{\perp_R}=C$ if $C_{Res}=\{0\}$ and $C_{Tor}=\mathbb{F}_2^n$.

        \item For an $E$-linear code $C$, (4) of Theorem \ref{Thm1b} implies that
        $$((C^{\perp})^{\perp})_{Res}=((C^{\perp})_{Tor})^\perp=((C_{Res})^\perp)^\perp=C_{Res},$$
        and 
        $$((C^{\perp})^{\perp})_{Tor}=((C^{\perp})_{Res})^\perp=((C_{Tor})^\perp)^\perp=C_{Tor}.$$
        Since $((C^{\perp_L})^{\perp_L})_{Res}= C_{Res}$ and $((C^{\perp_L})^{\perp_L})_{Tor}=C_{Tor}$, we have $(C^\perp)^\perp=C$. 
        
        \item For an $E$-linear code $C$, (2) and (3) of Theorem \ref{Thm1b} imply that 
        $$((C^{\perp_L})^{\perp_R})_{Res}=((C^{\perp_L})_{Tor})^\perp=((C_{Res})^\perp)^\perp=C_{Res}~~\text{and}~~((C^{\perp_L})^{\perp_R})_{Tor}=\mathbb{F}_2^n.$$
         Therefore, $(C^{\perp_L})^{\perp_R}=C$ if  $C_{Tor}=\mathbb{F}_2^n$.

        \item If $C$ is an $E$-linear code, then (2) and (3) of Theorem \ref{Thm1b} imply that 
        $$((C^{\perp_R})^{\perp_L})_{Res}=((C^{\perp_R})_{Res})^\perp=((C_{Tor})^\perp)^\perp=C_{Tor},$$
        and 
        $$((C^{\perp_R})^{\perp_L})_{Tor}=((C^{\perp_R})_{Res})^\perp=((C_{Tor})^\perp)^\perp=C_{Tor}.$$
         Therefore, $(C^{\perp_R})^{\perp_L}=C$ if  $C_{Res}=C_{Tor}$. Thus, $(C^{\perp_R})^{\perp_L}=C$ if $C$ is free. 
    \end{enumerate}
\end{proof}

We know that if $C$ is an $\mathbb{F}_q$-linear code, then $Hull(C)=Hull(C^\perp)$. Correspondingly, we have the following result for three hulls of an $E$-linear code.

\begin{theorem}
    An $E$-linear code $C$ of length $n$ satisfies the following properties:
    \begin{enumerate}
        \item $LHull(C)=LHull(C^{\perp_L})$ if $C$ is free, 
        \item  $Hull(C)=Hull(C^\perp)$ always,
        \item $RHull(C)=RHull(C^{\perp_R})$ if $C_{Res} \cap (C_{Tor})^\perp=\{0\}$ and $C_{Tor}=\mathbb{F}_2^n$.
  \item $RHull(C)=LHull(C^{\perp_R})$ if $C$ is free,
  \item $LHull(C)=RHull(C^{\perp_L})$ if $C_{Res}$ is SD and $C$ is free.
\end{enumerate}
\end{theorem}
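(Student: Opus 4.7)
The plan is to prove each of the five identities by computing the residue code and torsion code of both sides, using Theorem \ref{Thm2} for the various hulls and Theorem \ref{Thm1b} for the residues and torsions of the duals involved. Once both sides agree on residue and on torsion, equality of the $E$-linear codes follows by the same reduction used to close the proof of Theorem \ref{Thm4e}.

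For part (2), no hypothesis is needed: expanding $(Hull(C^\perp))_{Res}=(C^\perp)_{Res}\cap((C^\perp)_{Tor})^\perp$ and substituting $(C^\perp)_{Res}=(C_{Tor})^\perp$ and $(C^\perp)_{Tor}=(C_{Res})^\perp$ from (4) of Theorem \ref{Thm1b}, both residue codes collapse to $C_{Res}\cap (C_{Tor})^\perp$, and similarly both torsion codes collapse to $(C_{Res})^\perp\cap C_{Tor}$. For parts (1), (4), and (5), I would use freeness, $C_{Res}=C_{Tor}$, to collapse the intersections. In (1), $(LHull(C^{\perp_L}))_{Res}$ and $(LHull(C^{\perp_L}))_{Tor}$ both reduce to $C_{Res}\cap (C_{Res})^\perp$ via (2) of Theorem \ref{Thm1b}, matching the formulas for $LHull(C)$. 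In (4), since $(C^{\perp_R})_{Res}=(C_{Tor})^\perp$ and $(C^{\perp_R})_{Tor}=\mathbb{F}_2^n$, the code $LHull(C^{\perp_R})$ has residue $(C_{Tor})^\perp\cap C_{Tor}$ and torsion $C_{Tor}$, matching $RHull(C)$ once $C_{Res}=C_{Tor}$. In (5), the self-dual condition $(C_{Res})^\perp=C_{Res}$ together with freeness forces every intersection appearing on either side to reduce to $C_{Res}$.

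Part (3) is slightly different because it involves degenerate codes rather than free ones. Its hypotheses give $(RHull(C))_{Res}=C_{Res}\cap (C_{Tor})^\perp=\{0\}$ and $(RHull(C))_{Tor}=C_{Tor}=\mathbb{F}_2^n$, while $(RHull(C^{\perp_R}))_{Res}=(C_{Tor})^\perp\cap (\mathbb{F}_2^n)^\perp=\{0\}$ and $(RHull(C^{\perp_R}))_{Tor}=\mathbb{F}_2^n$ by (3) of Theorem \ref{Thm1b}. Both hulls therefore have trivial residue and full torsion, so they coincide.

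The step I expect to be the most delicate is the final implication: agreement of residue and torsion must force agreement of the $E$-linear codes themselves, and the paper has used this tacitly in the proof of Theorem \ref{Thm4e}. In every case above the relevant hulls are either free codes (so the generator-matrix description from Section 2 applies and the code is the $E$-span of its residue together with $\zeta$ times its torsion) or of the degenerate shape $\zeta\mathbb{F}_2^n$. I would verify freeness or the degenerate shape explicitly in each part before invoking this reduction, rather than treating it as automatic.
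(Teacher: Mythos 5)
Your proposal is correct, and for parts (1), (3), (4) and (5) it is the same argument the paper gives: compute residue and torsion of each hull via Theorem \ref{Thm2}, substitute the dual formulas from Theorem \ref{Thm1b}, and collapse the intersections using freeness (resp.\ the SD or degeneracy hypotheses). The one genuine divergence is part (2): the paper does not pass through residue and torsion at all, but argues directly that $Hull(C^\perp)=C^\perp\cap(C^\perp)^\perp=C^\perp\cap C=Hull(C)$ using $(C^\perp)^\perp=C$ from Theorem \ref{Thm4e}(3). That route is shorter and sidesteps the reduction principle entirely for that part, whereas yours has the virtue of uniformity across all five cases. On the delicate step you flag: the reduction from ``equal residue and equal torsion'' to ``equal codes'' needs no case-by-case verification of freeness, because for \emph{every} $E$-linear code $D$ one has $D=\kappa D_{Res}+\zeta D_{Tor}$ --- writing a codeword in its $\zeta$-adic form $c=\kappa a+\zeta b$ gives $a=\pi(c)\in D_{Res}$, then $\kappa a\in D$ by Theorem \ref{Thm1b}(1), hence $\zeta b=c-\kappa a\in D$ and $b\in D_{Tor}$, while the reverse containment is Theorem \ref{Thm1b}(1) again. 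Since all the hulls involved are intersections of left submodules and hence $E$-linear, the principle applies automatically (and it must, since in part (3) the code $RHull(C^{\perp_R})$, with residue $\{0\}$ and torsion $\mathbb{F}_2^n$, is not free). The paper indeed uses this tacitly, both here and in Theorem \ref{Thm4e}; your instinct to make it explicit is sound, but the general lemma is cleaner than per-case checks.
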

\begin{proof}
    \begin{enumerate}
        \item If $C$ is a free $E$-linear code, then (1) of Theorem \ref{Thm2} implies that
        \begin{align*}
        (LHull(C^{\perp_L}))_{Res}&=(C^{\perp_L})_{Res} \cap ((C^{\perp_L})_{Res})^\perp \\ &=(C_{Res})^\perp \cap ((C_{Res})^\perp)^\perp \\
        &=(C_{Res})^\perp \cap ((C_{Res}) \\
        &=(LHull(C))_{Res},
        \end{align*}
        and 
        \begin{align*}
        (LHull(C^{\perp_L}))_{Tor}&= ((C^{\perp_L})_{Tor}) \cap  ((C^{\perp_L})_{Res})^\perp\\
        &=C_{Res}\cap (C_{Res})^\perp \\
        &=C_{Tor}\cap (C_{Res})^\perp \\
        &=(LHull(C))_{Tor}.
        \end{align*}
        Therefore, if $C$ is free, then $LHull(C)=LHull(C^{\perp_L})$.
        
        \item From Theorem \ref{Thm4e} (3), for an $E$-linear code $C$, we have 
        $$Hull(C^\perp)= C^\perp \cap (C^\perp)^\perp=C^\perp \cap C =Hull(C).$$

         \item Follows from Theorem \ref{Thm2}(2), for an $E$-linear code $C$, we have
         \begin{align*}
             (RHull(C^{\perp_R}))_{Res}&=(C^{\perp_R})_{Res} \cap ((C^{\perp_R})_{Tor})^\perp \\
             &=(C_{Tor})^\perp \cap (\mathbb{F}_2^n)^\perp \\
             &=(C_{Tor})^\perp \cap \{0\} \\
             &=\{0\},
        \end{align*}
        and $$(RHull(C^{\perp_R}))_{Tor}=(C^{\perp_R})_{Tor}=\mathbb{F}_2^n.$$
        Therefore, by (2) of Theorem \ref{Thm2}, $RHull(C)=RHull(C^{\perp_R})$ if $C_{Res} \cap (C_{Tor})^\perp=\{0\}$ and $C_{Tor}=\mathbb{F}_2^n$. 
        
        \item If $C$ is a free $E$-linear code, then (1) of Theorem \ref{Thm2} implies that
        \begin{align*}
        (LHull(C^{\perp_R}))_{Res}&= ((C^{\perp_R})_{Res})^\perp \cap (C^{\perp_R})_{Res}  \\
        &= (C_{Tor}) \cap (C_{Tor})^\perp  \\
        &=  (C_{Res}) \cap (C_{Tor})^\perp \\
        &=(RHull(C))_{Res},
        \end{align*}
        and
        \begin{align*}
        (LHull(C^{\perp_R}))_{Tor}&=((C^{\perp_R})_{Res})^\perp \cap ((C^{\perp_R})_{Tor}) \\ &=((C_{Tor})^\perp)^\perp \cap \mathbb{F}_2^n \\
        &=(C_{Tor}) \\
        &=(RHull(C))_{Tor}.
        \end{align*}
        This shows that $RHull(C)=LHull(C^{\perp_R})$. %Hence, the result.

        \item If $C$ is a free $E$-linear code, then (2) of Theorem \ref{Thm2} implies that
        \begin{align*}
        (RHull(C^{\perp_L}))_{Res}&= ((C^{\perp_L})_{Tor})^\perp \cap (C^{\perp_L})_{Res} \\
        &=  ((C_{Res})^\perp)^\perp \cap (C_{Res})^\perp \\
        &= C_{Res} \cap (C_{Res})^\perp \\
        &=(LHull(C))_{Res},
        \end{align*}
        and
      \begin{align*}
        (RHull(C^{\perp_L}))_{Tor} &= ((C^{\perp_L})_{Tor}) \\ &=(C_{Res})^\perp \\
        &=(C_{Res}) \cap (C_{Res})^\perp~~~~\hspace{1cm}( \because~ C_{Res} ~~\text{is SD })  \\
        &= (C_{Res})^\perp \cap C_{Tor}~~~~\hspace{1.3cm}( \because~ C_{Res}=C_{Tor}) \\
        &=(LHull(C))_{Tor}.
        \end{align*}
        Since $(RHull(C^{\perp_L}))_{Res}=(LHull(C))_{Res}$ and $(RHull(C^{\perp_L}))_{Tor}=(LHull(C))_{Tor}$, $LHull(C)=RHull(C^{\perp_L})$.
    \end{enumerate}
\end{proof}

Next, we have the following result, which investigates the freeness of the hull codes of a free $E$-linear code.

\begin{theorem}\label{Thm3a}
 For a free $E$-linear code $C$, its left hull and two-sided hull codes are also free.   
\end{theorem}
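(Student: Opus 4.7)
The plan is to reduce the freeness claim to the equality of the residue and torsion codes of the hulls, and then read off that equality directly from Theorem \ref{Thm2} under the hypothesis $C_{Res}=C_{Tor}$.

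First I would recall the characterization of freeness stated in the definition: an $E$-linear code $D$ is free precisely when $D_{Res}=D_{Tor}$. So it suffices to verify this identity for $D=LHull(C)$ and for $D=Hull(C)$, using the assumption that $C$ itself is free, i.e.\ $C_{Res}=C_{Tor}$.

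For the left hull, part (1) of Theorem \ref{Thm2} gives
\[
(LHull(C))_{Res}=C_{Res}\cap (C_{Res})^\perp,\qquad (LHull(C))_{Tor}=(C_{Res})^\perp\cap C_{Tor}.
\]
Substituting $C_{Tor}=C_{Res}$ into the second expression immediately turns it into $(C_{Res})^\perp\cap C_{Res}$, which matches the first. Hence $(LHull(C))_{Res}=(LHull(C))_{Tor}$, and $LHull(C)$ is free. For the two-sided hull, part (3) of Theorem \ref{Thm2} gives
\[
(Hull(C))_{Res}=C_{Res}\cap (C_{Tor})^\perp,\qquad (Hull(C))_{Tor}=(C_{Res})^\perp\cap C_{Tor}.
\]
Again using $C_{Res}=C_{Tor}$, both sides collapse to $C_{Res}\cap (C_{Res})^\perp$, so $Hull(C)$ is free as well. (This is essentially the content of Corollary \ref{Cor1}, which already noted that for free $C$ the residue of $Hull(C)$ equals $Hull(C_{Res})$ and the torsion of $Hull(C)$ equals $Hull(C_{Tor})$, and these coincide since $C_{Res}=C_{Tor}$.)

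There is no real obstacle here: the machinery developed in Theorems \ref{Thm1b} and \ref{Thm2} does all the work, and the argument is just two one-line substitutions. The only subtlety worth flagging is why the right hull is excluded from the statement; this is explained by part (2) of Theorem \ref{Thm2}, where $(RHull(C))_{Tor}=C_{Tor}$ while $(RHull(C))_{Res}=C_{Res}\cap (C_{Tor})^\perp$, and these are generically unequal even when $C$ is free, so the proof strategy does not extend to $RHull(C)$.
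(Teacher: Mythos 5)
Your proposal is correct and follows essentially the same route as the paper's own proof: both invoke parts (1) and (3) of Theorem \ref{Thm2}, substitute $C_{Res}=C_{Tor}$, and conclude that residue equals torsion for each hull, which is the paper's stated criterion for freeness. The aside about why $RHull(C)$ is excluded matches the example the paper gives immediately after the theorem.
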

\begin{proof}
    If $C$ is a free $E$-linear code, then by (1) and (3) of Theorem \ref{Thm2}, we have

$$(LHull(C))_{Res}=C_{Res}\cap (C_{Res})^\perp,~~(LHull(C))_{Tor}=(C_{Res})^\perp \cap C_{Tor}$$
    and  $$(Hull(C))_{Res}=C_{Res} \cap (C_{Tor})^\perp,~~(Hull(C))_{Tor}=(C_{Res})^\perp \cap C_{Tor}.$$
Since $C$ is free, $C_{Res}=C_{Tor}$. Therefore,
    $$(LHull(C))_{Res}=(LHull(C))_{Tor}~~\text{and}~~ (Hull(C))_{Res}=(Hull(C))_{Tor}.$$
 Thus, the left hull and two-sided hull codes of the free $E$-code $C$ are also free.
 \end{proof}

Next, we include an example showing that the right hull code of a free $E$-linear code need not be free.

\begin{example}
  Consider the $E$-linear code $C$ with generator matrix 
  $$G=\begin{pmatrix}
      \kappa & 0 & 0& 0\\
      0 & \kappa & 0 & 0
  \end{pmatrix}.$$
  The residue and torsion codes of the code $C$ have the same generator matrix 
  $$G_{Res}=\begin{pmatrix}
      1 & 0 & 0& 0\\
      0 & 1 & 0 & 0
  \end{pmatrix}.$$
  Therefore, $C$ is free. Next, the dual of the residue code of $C$ is generated by the matrix
  $$H_{Res}=\begin{pmatrix}
      0 & 0 & 1& 0\\
      0 & 0 & 0 & 1
  \end{pmatrix}.$$
  Clearly, $C_{Res} \cap (C_{Res})^\perp=\{0\}$. Therefore, by (2) of Theorem \ref{Thm2}, we have $$(RHull(C))_{Res}=C_{Res} \cap (C_{Tor})^\perp=C_{Res} \cap (C_{Res})^\perp=\{0\} \neq (RHull(C))_{Tor}.$$
  Thus, the right hull $RHull(C)$ is not free.
\end{example}

 The following result plays a crucial role in our further investigations on hulls.
\begin{theorem}
    For a free $E$-linear code $C$, its left hull and two-sided hull codes coincide.
\end{theorem}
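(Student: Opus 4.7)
The plan is to prove both inclusions. The inclusion $Hull(C) \subseteq LHull(C)$ is immediate and requires no freeness hypothesis, since $C^{\perp} = C^{\perp_L} \cap C^{\perp_R} \subseteq C^{\perp_L}$ gives $C \cap C^{\perp} \subseteq C \cap C^{\perp_L}$. All the actual content therefore lies in the reverse inclusion, which I would obtain by comparing residue and torsion codes and then leveraging the freeness hypothesis to upgrade this comparison to an equality of codes.

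First I would apply parts (1) and (3) of Theorem \ref{Thm2} to record
$$(LHull(C))_{Res} = C_{Res} \cap (C_{Res})^{\perp}, \qquad (Hull(C))_{Res} = C_{Res} \cap (C_{Tor})^{\perp},$$
$$(LHull(C))_{Tor} = (C_{Res})^{\perp} \cap C_{Tor} = (Hull(C))_{Tor}.$$
Since $C$ is free, the defining relation $C_{Res} = C_{Tor}$ collapses the two expressions for the residue codes into the same set, and the torsion codes agree tautologically. Consequently $LHull(C)$ and $Hull(C)$ share identical residue and torsion $\mathbb{F}_2$-linear codes.

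By Theorem \ref{Thm3a}, both $LHull(C)$ and $Hull(C)$ are themselves free $E$-linear codes, so each has cardinality $|E|^{k} = 4^{\dim_{\mathbb{F}_2} D_{Res}}$, where $D$ denotes the hull in question and $k$ is its rank. Equal residue codes therefore force equal cardinalities, and combined with the inclusion $Hull(C) \subseteq LHull(C)$ from the first step this yields $LHull(C) = Hull(C)$. The main point requiring care is precisely this last step: for a general $E$-linear code the residue and torsion codes do not determine the code itself, so the argument genuinely hinges on the freeness of the two hulls supplied by Theorem \ref{Thm3a}, which in turn depends on the freeness of $C$.
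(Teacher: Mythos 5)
Your proof is correct and follows essentially the same route as the paper: both apply parts (1) and (3) of Theorem \ref{Thm2} and then use $C_{Res}=C_{Tor}$ to identify the residue and torsion codes of the two hulls. The only difference is that you justify the final step (equal residue and torsion codes imply equal codes) explicitly, via the trivial inclusion $Hull(C)\subseteq LHull(C)$ together with a cardinality count using the freeness of both hulls from Theorem \ref{Thm3a}, whereas the paper leaves this implicit; your added care is sound and arguably an improvement.
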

\begin{proof}
 Since $C$ is free, by (1) and (3) of Theorem \ref{Thm2}, we have 

$$(LHull(C))_{Res}=C_{Res}\cap (C_{Res})^\perp,~~(LHull(C))_{Tor}=(C_{Res})^\perp \cap C_{Tor}$$
    and  $$(Hull(C))_{Res}=C_{Res} \cap (C_{Tor})^\perp,~~(Hull(C))_{Tor}=(C_{Res})^\perp \cap C_{Tor}.$$
Further, the freeness of the $E$-linear code $C$ implies that $C_{Res}=C_{Tor}$. Therefore,
    $$(LHull(C))_{Res}=(Hull(C))_{Res}~~\text{and}~~ (LHull(C))_{Tor}=(Hull(C))_{Tor}.$$
 Thus, the left hull and two-sided hull codes of the free $E$-code $C$ are equal.
\end{proof}

According to the above results, the left hull and two-sided hull codes of a free $E$-linear code coincide, without assurance of the freeness of the right hull code. Therefore, from now onward, we focus exclusively on the two-sided hull code of a free $E$-linear code.

For a free $E$-linear code $C$, the following result provides us the generator matrix of its hull code.

\begin{theorem}\label{thm1a}
 If $C$ is a free $E$-linear code with $G$ as a generator matrix of the hull code of $C_{Res}$, then its hull code $Hull(C)$ has a generator matrix $\kappa G$.
 
\end{theorem}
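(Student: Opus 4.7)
My plan is to verify the two inclusions $\langle \kappa G \rangle_E \subseteq Hull(C)$ and $Hull(C) \subseteq \langle \kappa G \rangle_E$ directly; once these hold, $\kappa G$ automatically qualifies as a generator matrix, because any $\mathbb{F}_2$-additive combination $\sum_i \gamma_i (\kappa g_i)$ can be rewritten as the $E$-combination $\sum_i (\gamma_i \kappa)(\kappa g_i)$ using $\kappa^2 = \kappa$, so the additive span is absorbed into the $E$-span.

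First I would collect the consequences of freeness. By Theorem \ref{Thm3a}, $Hull(C)$ is itself a free $E$-linear code, and Corollary \ref{Cor1} combined with $C_{Res} = C_{Tor}$ gives $(Hull(C))_{Res} = (Hull(C))_{Tor} = Hull(C_{Res})$. Consequently, the rows $g_1, \dots, g_k$ of $G$ form an $\mathbb{F}_2$-basis of both the residue and torsion codes of $Hull(C)$. The easy inclusion is then immediate: each $g_i \in (Hull(C))_{Res}$ gives $\kappa g_i \in Hull(C)$ by applying Theorem \ref{Thm1b}(1) to the $E$-linear code $Hull(C)$, and since $Hull(C)$ is a left $E$-module, the entire $E$-span of the rows of $\kappa G$ sits inside $Hull(C)$.

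For the reverse inclusion I would pick an arbitrary $c \in Hull(C)$ and apply the componentwise $\zeta$-adic decomposition to write $c = \kappa u + \zeta v$ with unique $u, v \in \mathbb{F}_2^n$. Then $u = \pi(c) \in (Hull(C))_{Res}$, and because $\kappa u \in Hull(C)$ by Theorem \ref{Thm1b}(1), the identity $\zeta v = c - \kappa u$ places $v \in (Hull(C))_{Tor}$. By the preceding step both $u$ and $v$ lie in the $\mathbb{F}_2$-span of $g_1, \dots, g_k$, so I may write $u = \sum_i \alpha_i g_i$ and $v = \sum_i \beta_i g_i$ with $\alpha_i, \beta_i \in \mathbb{F}_2$. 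Setting $a_i = \alpha_i \kappa + \beta_i \zeta \in E$ and applying the multiplication table of $E$ (in particular $\kappa \cdot \kappa = \kappa$ and $\zeta \cdot \kappa = \zeta$), one verifies entrywise that $a_i \cdot (\kappa g_i) = \alpha_i \kappa g_i + \beta_i \zeta g_i$, and summing over $i$ recovers $c = \sum_i a_i (\kappa g_i) \in \langle \kappa G \rangle_E$.

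The only point requiring any care is that $E$ has no multiplicative identity, so the prefactor $\kappa$ on $G$ cannot be peeled off by cancellation; the argument must extract the residue and torsion parts of $c$ simultaneously. The key trick is that the combined $E$-coefficient $\alpha_i \kappa + \beta_i \zeta$ captures the residue contribution through $\alpha_i \kappa \cdot \kappa g_i = \alpha_i \kappa g_i$ and the torsion contribution through $\beta_i \zeta \cdot \kappa g_i = \beta_i \zeta g_i$, which is precisely what freeness of $Hull(C)$ (ensuring $(Hull(C))_{Res} = (Hull(C))_{Tor}$) makes possible.
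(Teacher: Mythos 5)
Your proposal is correct, and its skeleton matches the paper's: both first invoke Theorem \ref{Thm3a} to get freeness of $Hull(C)$ and Corollary \ref{Cor1} to identify $(Hull(C))_{Res}$ with $Hull(C_{Res})$. The difference lies in the final lifting step. The paper simply cites Theorem $1$ of \cite{Kushwaha2} (``if $G_1$ generates $C_{Res}$ for a free code $C$, then $\kappa G_1$ generates $C$'') and is done in two lines, whereas you re-prove that lifting from scratch: the $\zeta$-adic decomposition $c=\kappa u+\zeta v$, the observation that $u\in(Hull(C))_{Res}$ and $v\in(Hull(C))_{Tor}$ with both spaces equal to $\langle G\rangle_{\mathbb{F}_2}$ by freeness, and the explicit coefficients $a_i=\alpha_i\kappa+\beta_i\zeta$ satisfying $a_i\kappa=a_i$ so that $\sum_i a_i(\kappa g_i)=\kappa u+\zeta v=c$. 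Your computation is sound (including the remark that $\langle\kappa G\rangle_{\mathbb{F}_2}\subseteq\langle\kappa G\rangle_E$ via $\kappa^2=\kappa$, which is needed to meet the paper's definition of a generator matrix), and the payoff is a self-contained argument that does not lean on the external reference; the cost is length, and the paper's version makes the dependence on the earlier lifting lemma explicit rather than hiding it inside a computation.
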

\begin{proof}
  Since $C$ is a free $E$-code, $Hull(C)$ is also free by Theorem  \ref{Thm3a}. Next, by Corollary \ref{Cor1}, we have $(Hull(C))_{Res}=Hull(C_{Res})$. Moreover, from Theorem $1$ of \cite{Kushwaha2}, if $C$ is a free $E$-linear code and $G_1$ is a generator matrix of $C_{Res}$, then $\kappa G_1$ is a generator matrix of $C$. Therefore, if the hull code of $C_{Res}$ has a generator matrix $G$, then $\kappa G$ is a generator matrix of $Hull(C)$. 
\end{proof}

Now, we compute the hull-rank of a free $E$-linear code that is needed for the classification of optimal free $E$-linear codes.

\begin{theorem}\label{thm3}
   If $C$ is a free $E$-linear code, then
   $$rank(Hull(C))=dim(Hull(C_{Res})).$$
\end{theorem}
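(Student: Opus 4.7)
The plan is to combine Theorem~\ref{Thm3a}, Corollary~\ref{Cor1}, and Theorem~\ref{thm1a}. Since $C$ is free, Theorem~\ref{Thm3a} guarantees that $Hull(C)$ is itself a free $E$-linear code, so its rank, being the size of a minimal generating set, is well-defined. Also, by the equality part of Corollary~\ref{Cor1} (valid because $C$ is free), we have $(Hull(C))_{Res}=Hull(C_{Res})$. The goal therefore reduces to showing that the rank of a free $E$-linear code coincides with the $\mathbb{F}_2$-dimension of its residue code, applied to $D=Hull(C)$.

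For the upper bound, I would choose a generator matrix $G$ of $Hull(C_{Res})$ with exactly $k=\dim(Hull(C_{Res}))$ rows. Theorem~\ref{thm1a} then asserts that $\kappa G$ is a generator matrix of $Hull(C)$, so $Hull(C)$ admits a generating set of size $k$ and hence $rank(Hull(C))\le k$.

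For the lower bound, I would take any minimal generating set $\{y_1,\dots,y_m\}$ of $Hull(C)$, decompose each $y_i$ coordinatewise via its $\zeta$-adic form, and apply $\pi$ componentwise. Because $\pi(ay)=\pi(a)\pi(y)$ for $a\in E$ and because the $\mathbb{F}_2$-action on $E$ is compatible with $\pi$ (as described in the preliminaries), the images $\pi(y_1),\dots,\pi(y_m)$ $\mathbb{F}_2$-span the residue code $(Hull(C))_{Res}=Hull(C_{Res})$. This forces $m\ge\dim(Hull(C_{Res}))=k$, giving $rank(Hull(C))\ge k$. The main subtlety is this projection step, namely verifying that reducing an $E$-generating set (which includes both $E$- and $\mathbb{F}_2$-combinations) modulo $J$ yields an $\mathbb{F}_2$-spanning set of the residue code; but this is immediate from the compatibility noted above. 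The two inequalities together yield the desired equality.
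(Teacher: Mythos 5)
Your proposal is correct and follows essentially the same route as the paper, which simply cites the definition of rank together with Theorem~\ref{thm1a} (via Theorem~\ref{Thm3a} and Corollary~\ref{Cor1}); you have merely made explicit the two inequalities that the paper leaves implicit. In particular, your lower-bound step (projecting a minimal generating set through $\pi$ to obtain an $\mathbb{F}_2$-spanning set of $Hull(C_{Res})$) is a valid and welcome elaboration of what the paper takes for granted.
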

\begin{proof}
    Follows the definition of the rank of a free $E$-linear code and Theorem \ref{thm1a}.
\end{proof}

\section{Build-Up Constructions}
In this section, we give four build-up construction techniques that construct free $E$-linear codes with a larger length and hull-rank from free $E$-linear codes with a smaller length and hull-rank. Additionally, we give some examples of codes constructed by these build-up construction methods. Note that [$n,k$] represents a free $E$-linear code of length $n$ and rank $k$.\par
Now, we recall Proposition $1$ of \cite{Li18} for our further investigations.
%[ \cite{Li18}, Proposition $1$]
\begin{proposition}\label{Thm4} Let $C$ be an $\mathbb{F}_q$-linear $k$-dimensional code with hull-dimension $h$. If $G$ is a generator matrix of $C$ and $G^T$ denotes the transpose of $G$, then $h=k-rank(GG^T)$. 
\end{proposition}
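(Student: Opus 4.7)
The plan is to reduce the problem to a routine linear-algebra computation via the standard isomorphism between information vectors and codewords. Since $C$ has $\mathbb{F}_q$-dimension $k$ and $G$ is a full-rank $k\times n$ generator matrix, the encoding map
$$\phi:\mathbb{F}_q^k\longrightarrow C,\qquad x\mapsto xG,$$
is an $\mathbb{F}_q$-linear isomorphism (row-vector convention). Under $\phi$, counting the dimension of $Hull(C)=C\cap C^\perp$ amounts to counting $x\in\mathbb{F}_q^k$ for which the codeword $xG$ additionally lies in $C^\perp$.

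Next I would rewrite the membership condition $xG\in C^\perp$ as a matrix equation. Since $C^\perp$ consists of vectors orthogonal to every generator of $C$, we have $xG\in C^\perp$ if and only if $(xG)\cdot(e_iG)^{T}=0$ for each standard basis vector $e_i\in\mathbb{F}_q^k$, equivalently $x(GG^{T})e_i^{T}=0$ for all $i$. This is the single condition $xGG^{T}=0$, so
$$\phi^{-1}(Hull(C))=\{x\in\mathbb{F}_q^k:xGG^{T}=0\},$$
which is precisely the left kernel of the $k\times k$ matrix $GG^{T}$.

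Finally, the rank-nullity theorem applied to the $\mathbb{F}_q$-linear map $x\mapsto xGG^{T}$ from $\mathbb{F}_q^k$ to $\mathbb{F}_q^k$ gives $\dim\ker=k-\mathrm{rank}(GG^{T})$. Composing with the isomorphism $\phi$ yields
$$h=\dim Hull(C)=k-\mathrm{rank}(GG^{T}),$$
as claimed. There is no real obstacle in this argument, as the result is a classical linear-algebra identity; the only subtlety to watch is the row/column convention, since the statement is naturally expressed in terms of the \emph{left} kernel of $GG^{T}$ rather than the right kernel, but both have the same dimension $k-\mathrm{rank}(GG^{T})$, so the conclusion is unambiguous.
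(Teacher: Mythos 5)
Your proof is correct and complete: the identification of $\phi^{-1}(Hull(C))$ with the left kernel of $GG^{T}$, followed by rank--nullity, is exactly the standard argument. Note that the paper itself does not prove this statement at all --- it simply recalls it as Proposition 1 of the cited work of Li and Zeng --- so your self-contained derivation matches the classical proof one would find in that reference rather than anything in this paper.
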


The following result is our first build-up construction method.\par

\begin{theorem}[\textbf{Construction I}] Let $C$ be a free $E$-linear [$n,k$]-code. Also, let $G$ (with rows $r_1,r_2,\ldots,r_k$) and $H$ (with rows $s_1,s_2,\ldots,s_{n-k}$) be the generator and parity-check matrices of its residue code $C_{Res}$, respectively. Furthermore, fix a vector $u=(u_1, u_2, \ldots, u_n) \in \mathbb{F}_2^n$ with $\langle u, u \rangle =1$, and define the following scalars:

$$v_i=\langle u, r_i \rangle~~ \text{for}~~ 1 \leq i \leq k,  ~~\text{and}~~ w_j=\langle u, s_j \rangle ~~\text{for}~~ 1 \leq j \leq n-k.$$ 
Now, if the hull-rank of $C$ is $l$, then 
\begin{enumerate}
    \item[(a)] The $E$-linear code $D$ with generator matrix $G'$ given below is a free [$n+2, k+1$]-code with hull-rank $l+1$:

    \[
G' =
\left(
\begin{array}{cc|ccccc}
\kappa & 0 & \kappa u_1 & \kappa u_2 &   \cdots \cdots & & \kappa u_n \\ \hline
\kappa v_1 & \kappa v_1 & & &  \kappa r_1 & & \\
\kappa v_2 & \kappa v_2 & & &  \kappa r_2 & & \\
\vdots & \vdots & & &  \vdots & & \\
\kappa v_k & \kappa v_k & & &  \kappa r_k & &
\end{array}
\right)
.\]

    \item[(b)] The code $D$ has the following parity-check matrix:

\[
H' =
\left(
\begin{array}{cc|ccccc}
\kappa & 0 & \kappa u_1 & \kappa u_2 &   \cdots \cdots & & \kappa u_n \\ \hline
\kappa w_1 & \kappa w_1 & & &  \kappa s_1 & & \\
\kappa w_2 & \kappa w_2 & & &  \kappa s_2 & & \\
\vdots & \vdots & & &  \vdots & & \\
\kappa w_{n-k} & \kappa w_{n-k} & & &  \kappa s_{n-k} & &
\end{array}
\right)
.\]
    
\end{enumerate}
\end{theorem}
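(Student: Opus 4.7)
The plan is to reduce the whole statement to the binary residue code and then invoke Theorem~\ref{thm1a} together with the rank formula for binary hulls (Proposition~\ref{Thm4}). Observe that $G' = \kappa \widetilde{G}$ and $H' = \kappa \widetilde{H}$, where $\widetilde{G}$ and $\widetilde{H}$ are the binary matrices obtained by stripping the $\kappa$ factor. By Theorem~$1$ of \cite{Kushwaha2} (as cited in the proof of Theorem~\ref{thm1a}), every binary matrix $M$ with $r$ independent rows yields a free $E$-linear code of rank $r$ via $\kappa M$, whose residue code is generated by $M$. So it suffices to analyze the $\mathbb{F}_2$-linear codes generated by $\widetilde{G}$ and $\widetilde{H}$.

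The first step is to verify that the rows of $\widetilde{G}$ are $\mathbb{F}_2$-linearly independent. A relation $a_0(1,0,u)+\sum_i a_i(v_i,v_i,r_i)=0$ forces $\sum_i a_i v_i = 0$ from coordinate $2$, hence $a_0 = 0$ from coordinate $1$, and then $a_i = 0$ for all $i$ by independence of the $r_i$. The identical argument works for $\widetilde{H}$. Thus $D_{Res}$ has length $n+2$ and dimension $k+1$, and by the observation above $D$ is free of rank $k+1$.

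To prove statement (b), I would pair each row of $\widetilde{G}$ against each row of $\widetilde{H}$ in $\mathbb{F}_2$. Using $v_i = \langle u,r_i\rangle$, $w_j=\langle u,s_j\rangle$, $\langle u,u\rangle=1$, and $GH^T=0$, every one of the four row-type pairings collapses to $0$ modulo $2$: the diagonal term $(1,0,u)\cdot(1,0,u)^T = 1+\langle u,u\rangle$ vanishes, while in each of the other three cases the first two coordinates contribute a matching pair and the remaining block contribution cancels against $GH^T=0$. Therefore $\widetilde{G}\widetilde{H}^T = 0$, and since $\widetilde{H}$ has $n-k+1 = (n+2)-(k+1)$ independent rows, it generates $(D_{Res})^\perp$. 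Because $D$ is free, $(D^\perp)_{Res}=(D_{Res})^\perp$, so by Theorem~\ref{thm1a} the matrix $\kappa\widetilde{H}=H'$ generates $D^\perp$, i.e., $H'$ is a parity-check matrix of $D$.

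Finally, for the hull-rank in (a), Theorem~\ref{thm3} reduces the question to computing $\dim(Hull(D_{Res}))$, which by Proposition~\ref{Thm4} equals $(k+1) - \mathrm{rank}(\widetilde{G}\widetilde{G}^T)$. The same pairing computation, now applied to $\widetilde{G}$ against itself, gives the block form
$$\widetilde{G}\widetilde{G}^T = \begin{pmatrix} 0 & 0 \\ 0 & GG^T \end{pmatrix},$$
so $\mathrm{rank}(\widetilde{G}\widetilde{G}^T) = \mathrm{rank}(GG^T) = k - l$ (applying Proposition~\ref{Thm4} once more, this time to $C_{Res}$, where by Corollary~\ref{Cor1} and Theorem~\ref{thm3} the hull-dimension of $C_{Res}$ equals $l$). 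Thus $\dim(Hull(D_{Res})) = (k+1) - (k-l) = l+1$, giving hull-rank $l+1$ for $D$. The only step carrying real content is the cancellation identity above: the hypothesis $\langle u,u\rangle = 1$ is precisely what kills the top-left $1$ in $\widetilde{G}\widetilde{G}^T$, while the definitions $v_i=\langle u,r_i\rangle$ and $w_j=\langle u,s_j\rangle$ are engineered exactly to clear its first row and first column; the rest is bookkeeping.
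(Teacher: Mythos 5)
Your proposal is correct and follows essentially the same route as the paper: pass to the residue code $G_1=\pi(G')$, compute the block forms of $G_1G_1^T$ and $G_1H_1^T$, and apply Proposition \ref{Thm4} together with Theorem \ref{thm3} to get hull-rank $l+1$, and the dimension count $n-k+1=(n+2)-(k+1)$ to identify the parity-check matrix. The only difference is that you spell out the row-independence and the entrywise cancellations (where $\langle u,u\rangle=1$ kills the corner entry) that the paper leaves implicit.
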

\begin{proof}

\begin{enumerate}
    \item[(a)]
    By the form of $G'$, it is clear that it generates a free $E$-linear code, and its residue code has the generator matrix
    \[
G_1 =
\left(
\begin{array}{cc|ccccc}
1 & 0 &  u_1 & u_2 &   \cdots \cdots & & u_n \\ \hline
 v_1 &  v_1 & & &   r_1 & & \\
 v_2 &  v_2 & & &   r_2 & & \\
\vdots & \vdots & & &  \vdots & & \\
 v_k & v_k & & &   r_k & &
\end{array}
\right)
.\]
Then, we have 
\[
G_1G_1^T=
\left(
\begin{array}{ccccccc}
0 & 0 &  0 & 0 &   \cdots \cdots & & 0 \\  
 0 &   & & &   & & \\
 0 &   & & GG^T&  & & \\
\vdots &  & & & & & \\
0 &  & & &   & &
\end{array}
\right)
.\]
Therefore, $rank(G_1G_1^T)=rank(GG^T)$. Since $C$ is free, by Theorem \ref{thm3}, we have $dim(Hull(C_{Res}))=rank(Hull(C))=l$.  Also, by Proposition \ref{Thm4}, $rank(GG^T)=k-l$ and hence $rank(G_1G_1^T)=k-l$. Again, Proposition \ref{Thm4} implies that the hull-dimension of $D_{Res}$ is given by $(k+1)-rank(G_1G_1^T)=(k+1)-(k-l)=l+1$. Since, for a free $E$-linear code $D$, $rank(D)=dim(D_{Res})$, we have $rank(Hull(D))=dim(Hull(D_{Res}))=l+1$. This completes the proof of the first part.

\item[(b)] We have  
\[
\pi(H') =H_1=
\left(
\begin{array}{cc|ccccc}
1 & 0 &  u_1 &  u_2 &   \cdots \cdots & & u_n \\ \hline
 w_1 & w_1 & & &  s_1 & & \\
 w_2 &  w_2 & & &   s_2 & & \\
\vdots & \vdots & & &  \vdots & & \\
 w_{n-k} & w_{n-k} & & &  s_{n-k} & &
\end{array}
\right)
.\]
Then, we have 
\[
G_1H_1^T=
\left(
\begin{array}{ccccccc}
0 & 0 &  0 & 0 &   \cdots \cdots & & 0 \\  
 0 &   & & &   & & \\
 0 &   & & GH^T&  & & \\
\vdots &  & & & & & \\
0 &  & & &   & &
\end{array}
\right)=\boldsymbol{0}
.\]

The dimension of the code generated by $H_1$ is $(n-k)+1$, since the first row of $H_1$ is not a linear combination of the other rows of $H_1$. Further, 
$$dim \langle H_1 \rangle =(n-k)+1=(n+2)-(k+1)=dim((D^\perp)_{Res}).$$
Therefore, $H_1$ is a parity-check matrix of $(D)_{Res}$. Thus, $H'$ is a parity-check matrix of $D$.

\end{enumerate}
\end{proof}

Next, we give an example which utilizes a free $E$-linear [$6,4$]-code with hull-rank $2$ to construct a free [$8,5$]-code over $E$ with hull-rank $3$ by \textbf{Construction I}.

\begin{example}
    Consider the free $E$-linear code $C$ with generator  matrix 
    \[
    N=
\left (
\begin{array}{cccccc}
    \kappa & 0 & 0 & 0 & \kappa & 0  \\
0 & \kappa & 0 & 0 & \kappa & \kappa \\
0 & 0 & \kappa & 0 & 0 & \kappa \\
0 & 0 & 0 & \kappa & \kappa & \kappa 
\end{array}
  \right ) 
    .\]
    Clearly, $C$ is a [$6,4$]-code with hull-rank $2$, and  $C_{Res}$ has a generator matrix
    
\[
    G=
\left (
\begin{array}{cccccc}
    1 & 0 & 0 & 0 & 1 & 0  \\
0 & 1 & 0 & 0 & 1 & 1 \\
0 & 0 & 1 & 0 & 0 & 1 \\
0 & 0 & 0 & 1 & 1 & 1 
\end{array}
  \right ) 
    .\]
     Next, if we take $u=(1,0,0,1,0,1)$, then $\langle u, u \rangle =1$ and $v_1=1, v_2=1, v_3=1$ and $v_4=0$. Now, consider the code $D$ with the following generator matrix

     \[
    G'=
\left (
\begin{array}{cc|cccccc}
\kappa & 0 & \kappa & 0 & 0 & \kappa & 0 & \kappa \\ \hline
\kappa & \kappa&     & &  & &  &   \\ 
\kappa & \kappa&  &  & N &  &  &  \\
\kappa & \kappa&  &  &  &  &  &  \\
0 & 0 &  &  &  &  &  &  
\end{array}
  \right ) 
    .\]
    Then, by \textbf{Construction I}, $D$ is a free [$8,5$]-code over $E$ with hull-rank $3$.

\end{example}

Now, we have our second build-up construction method as follows. \par

\begin{theorem}[\textbf{Construction II}] Let $C$ be a free $E$-linear [$n,k$]-code. Let $G$ (with rows $r_1,r_2,\ldots,r_k$) and $H$ (with rows $s_1,s_2,\ldots,s_{n-k}$) be the generator and parity-check matrices of its residue code $C_{Res}$, respectively. Furthermore, fix a vector $u=(u_1, u_2, \ldots, u_n) \in \mathbb{F}_2^n$ with $\langle u, u \rangle =0$ and $\langle u, r_i \rangle=0~~ \text{for}~~ 1 \leq i \leq k$. Also, define
$w_j=\langle u, s_j \rangle ~~\text{for}~~ 1 \leq j \leq n-k$. With these assumptions, if the hull-rank of $C$ is $l$, then

\begin{enumerate}
    \item[(a)] The code $D$ with the generator matrix

    \[
G' =
\left(
\begin{array}{cc|ccccc}
\kappa & \kappa & \kappa u_1 & \kappa u_2 &   \cdots \cdots & & \kappa u_n \\ \hline
0 & 0 & & &  \kappa r_1 & & \\
0 & 0 & & &  \kappa r_2 & & \\
\vdots & \vdots & & &  \vdots & & \\
0 & 0 & & &  \kappa r_k & &
\end{array}
\right)
\]

    is a free $E$-linear [$n+2, k+1$]-code with hull-rank $l+1$.
    \item[(b)] The code $D$ has the following parity-check matrix:

\[
H' =
\left(
\begin{array}{cc|ccccc}
\kappa & \kappa & \kappa u_1 & \kappa u_2 &   \cdots \cdots & & \kappa u_n \\ \hline
0 & \kappa w_1 & & &  \kappa s_1 & & \\
0 & \kappa w_2 & & &  \kappa s_2 & & \\
\vdots & \vdots & & &  \vdots & & \\
0 & \kappa w_{n-k} & & &  \kappa s_{n-k} & &
\end{array}
\right)
.\]
    
\end{enumerate}
\end{theorem}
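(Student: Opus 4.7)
My plan is to imitate the proof of Construction I line-by-line, with the two new ``header'' coordinates being $(\kappa,\kappa)$ instead of $(\kappa,0)$ and the new first column of the lower block being $0$ rather than $\kappa v_i$. Throughout I will write $G_1=\pi(G')$ and $H_1=\pi(H')$ for the residue-level matrices; the bridge from $\mathbb{F}_2$ back up to $E$ is the fact (Theorem $1$ of \cite{Kushwaha2}, already invoked in the proof of Theorem \ref{thm1a}) that for a free $E$-linear code a generator matrix is obtained by $\kappa$-scaling a generator matrix of its residue code, so everything reduces to $\mathbb{F}_2$-level statements about $G_1$ and $H_1$.

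For part (a), every entry of $G'$ lies in $\kappa E$, so $D=\langle G'\rangle_E$ is free with residue code generated by $G_1$; since the first row of $G_1$ is the only one with a $1$ in coordinate $1$ and the remaining rows inherit independence from $G$, $D$ is a free $[n+2,k+1]$-code. The crucial computation is $G_1 G_1^T$: the $(1,1)$-entry is $1+1+\langle u,u\rangle=0$ using $\langle u,u\rangle=0$, the $(1,i)$- and $(i,1)$-entries for $i\geq 2$ collapse to $\langle u,r_{i-1}\rangle=0$ by hypothesis, and the remaining $k\times k$ block is exactly $GG^T$. Hence $\mathrm{rank}(G_1G_1^T)=\mathrm{rank}(GG^T)=k-l$, where the last equality comes from Proposition \ref{Thm4} applied to $C_{Res}$ combined with $\dim \mathrm{Hull}(C_{Res})=\mathrm{rank}(\mathrm{Hull}(C))=l$ (Theorem \ref{thm3}, since $C$ is free). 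Applying Proposition \ref{Thm4} to $D_{Res}$ then gives hull-dimension $(k+1)-(k-l)=l+1$, and a final invocation of Theorem \ref{thm3} lifts this to $\mathrm{rank}(\mathrm{Hull}(D))=l+1$.

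For part (b), the strategy is to verify that $H_1$ is a parity-check matrix of $D_{Res}$; then, since $D^\perp$ is free (as $D$ is free), $H'=\kappa H_1$ is automatically a parity-check matrix of $D$ by the $\kappa$-scaling result cited above. I would compute $G_1H_1^T$ block by block: the $(1,1)$-entry is $1+1+\langle u,u\rangle=0$; for $j\geq 2$ the $(1,j)$-entry is $1\cdot 0+1\cdot w_{j-1}+\langle u,s_{j-1}\rangle=w_{j-1}+w_{j-1}=0$ by the definition of $w_{j-1}$; for $i\geq 2$ the $(i,1)$-entry is $\langle r_{i-1},u\rangle=0$ by hypothesis; and the remaining $(i,j)$-entries equal $\langle r_{i-1},s_{j-1}\rangle=0$ because $H$ is a parity-check matrix of $C_{Res}$. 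The rank count mirrors Construction I: the first row of $H_1$ has a $1$ in coordinate $1$ that no later row shares, so $\dim\langle H_1\rangle=(n-k)+1=(n+2)-(k+1)=\dim(D_{Res})^\perp$.

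The only genuinely new work relative to Construction I is tracking two $\mathbb{F}_2$-cancellations. The choice $(\kappa,\kappa)$ in the header forces $1+1=0$ in every self-inner-product of the first row of $G_1$, and the placement $(0,\kappa w_j)$ in the first two columns of $H_1$ forces the cancellation $0+w_j+w_j=0$ when paired against $(1,1,u)$. These cancellations replace the role played by the scalars $v_i$ in Construction I; the construction is engineered precisely so that they occur, and once they are noted the rest of the argument is a rote transcription of the Construction I proof.
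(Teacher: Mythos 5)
Your proposal is correct and follows essentially the same route as the paper: the paper likewise reduces to the residue level, computes that $G_1G_1^T$ has zero first row and column (using $\langle u,u\rangle=0$ and $\langle u,r_i\rangle=0$) with remaining block $GG^T$, applies Proposition \ref{Thm4} and Theorem \ref{thm3} to get hull-rank $l+1$, and for (b) verifies $G_1H_1^T=0$ together with the dimension count $\dim\langle H_1\rangle=(n+2)-(k+1)$. The paper states these steps telegraphically by deferring to Construction I; your write-up just fills in the same cancellations explicitly.
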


\begin{proof}

\begin{enumerate}
    \item[(a)]
     It is clear that $G'$ generates a free $E$-linear code, and its residue code has the generator matrix
   \[
G_1 =
\left(
\begin{array}{cc|ccccc}
1 & 1 &  u_1 &  u_2 &   \cdots \cdots & & u_n \\ \hline
0 & 0 & & &  r_1 & & \\
0 & 0 & & &   r_2 & & \\
\vdots & \vdots & & &  \vdots & & \\
0 & 0 & & & r_k & &
\end{array}
\right)
\]

Then, we have 
\[
G_1G_1^T=
\left(
\begin{array}{ccccccc}
0 & 0 &  0 & 0 &   \cdots \cdots & & 0 \\  
 0 &   & & &   & & \\
 0 &   & & GG^T&  & & \\
\vdots &  & & & & & \\
0 &  & & &   & &
\end{array}
\right)
.\]
The proof follows similar arguments to part $(a)$ of \textbf{Construction I}.

\item[(b)] We have  
\[
\pi(H') =H_1=
\left(
\begin{array}{cc|ccccc}
1 & 1 &  u_1 &  u_2 &   \cdots \cdots & & u_n \\ \hline
0 & w_1 & & &  s_1 & & \\
 0 &  w_2 & & &   s_2 & & \\
\vdots & \vdots & & &  \vdots & & \\
0 & w_{n-k} & & &  s_{n-k} & &
\end{array}
\right)
.\]
Then 
\[
G_1H_1^T=
\left(
\begin{array}{ccccccc}
0 & 0 &  0 & 0 &   \cdots \cdots & & 0 \\  
 0 &   & & &   & & \\
 0 &   & & GH^T&  & & \\
\vdots &  & & & & & \\
0 &  & & &   & &
\end{array}
\right)=\boldsymbol{0}
.\]

The remainder of the proof is parallel to the part $(b)$ of \textbf{Construction I}.

\end{enumerate}
\end{proof}

The following example utilizes a free $E$-linear [$9,4$]-code with hull-rank $4$ to construct a free [$11,5$]-code over $E$ with hull-rank $5$ by \textbf{Construction II}.

\begin{example}
    Let $C$ be an $E$-linear code with generator matrix  

\[
    N=
\left (
\begin{array}{ccccccccc}
    \kappa & 0 & 0 & 0 & \kappa & \kappa & 0 & \kappa & 0  \\
0 & \kappa & 0 & 0 & 0 & \kappa & 0 & \kappa & \kappa\\
0 & 0 & \kappa & 0 & \kappa & \kappa & 0 & 0 & \kappa\\
0 & 0 & 0 &\kappa & \kappa & 0 & 0 & \kappa & \kappa 
\end{array}
  \right ) 
    .\]
   Then, $C_{Res}$ is generated by the matrix
    
\[
    G=
\left (
\begin{array}{ccccccccc}
    1 & 0 & 0 & 0 & 1 & 1 & 0 & 1 & 0 \\
0 & 1 & 0 & 0 & 0 & 1 & 0 & 1 & 1 \\
0 & 0 & 1 & 0 & 1 & 1 & 0 & 0 & 1 \\
0 & 0 & 0 & 1 & 1 & 0 & 0 & 1 & 1 
\end{array}
  \right ) 
    .\]

 Clearly, $C$ is a free [$9,4$]-code over $E$ with hull-rank $4$. If we take $u=(1,0,0,0,1,1,0,1,0)$, then $\langle u, u \rangle =0$ and $\langle u, r_i \rangle=0$ for $1 \leq i \leq 4$. Now, let $D$ be the $E$-linear code generated by the following matrix 

\[
    G'=
\left (
\begin{array}{cc|ccccccccc}
\kappa & \kappa & \kappa & 0 & 0 & 0 & \kappa & \kappa & 0 & \kappa & 0 \\ \hline 
 0 & 0 &    &  &  &  &  &  &  &  &   \\
0 & 0 &  &  &  & N &  &  &  &  & \\
0 & 0 &  &  &  &  &  &  &  &  & \\
0 & 0 &  &  &  &  &  &  &  &  &  
\end{array}
  \right ) 
    .\]
Then, by \textbf{Construction II}, $D$ is a free [$11,5$]-code over $E$ with hull-rank $5$.

\end{example}

The following result is our third build-up construction method.

\begin{theorem}[\textbf{Construction III}] Let $C$ be a free $E$-linear [$n,k$]-code. Also, let $G$ (with rows $r_1,r_2,\ldots,r_k$) and $H$ (with rows $s_1,s_2,\ldots,s_{n-k}$) be the generator and parity-check matrices of its residue code $C_{Res}$, respectively. Furthermore, fix a vector $u=(u_1, u_2, \ldots, u_n) \in \mathbb{F}_2^n$ with $\langle u, u \rangle =0$, and define the following scalars:

$$v_i=\langle u, r_i \rangle~~ \text{for}~~ 1 \leq i \leq k,  ~~\text{and}~~ w_j=\langle u, s_j \rangle ~~\text{for}~~ 1 \leq j \leq n-k.$$ 
Now, assume that not all $v_i$'s are zero and hull-rank of $C$ is $l$, then

\begin{enumerate}
    \item[(a)] The $E$-linear code $D$ with generator matrix $G'$ given below is a free [$n+2, k+1$]-code with hull-rank $l,l+1$ or $l+2$:

    \[
G' =
\left(
\begin{array}{cc|ccccc}
\kappa & \kappa & \kappa u_1 & \kappa u_2 &   \cdots \cdots & & \kappa u_n \\ \hline
\kappa v_1 & 0 & & &  \kappa r_1 & & \\
\kappa v_1 & 0 & & &  \kappa r_2 & & \\
\vdots & \vdots & & &  \vdots & & \\
\kappa v_1 & 0 & & &  \kappa r_k & &
\end{array}
\right)
.\]

    \item[(b)] The code $D$ has the following parity-check matrix:

\[
H' =
\left(
\begin{array}{cc|ccccc}
\kappa & \kappa & \kappa u_1 & \kappa u_2 &   \cdots \cdots & & \kappa u_n \\ \hline
0 & \kappa w_1 & & &  \kappa s_1 & & \\
0 & \kappa w_2 & & &  \kappa s_2 & & \\
\vdots & \vdots & & &  \vdots & & \\
0 & \kappa w_{n-k} & & &  \kappa s_{n-k} & &
\end{array}
\right)
.\]
    
\end{enumerate}
\end{theorem}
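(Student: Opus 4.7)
The plan is to follow the template of the proofs of \textbf{Construction I} and \textbf{Construction II}. The new feature of \textbf{Construction III} is that the vector $v=(v_1,\ldots,v_k)^T$ is nonzero by hypothesis, so that the Gram matrix of the residue code is perturbed by a genuine rank-one matrix $vv^T$. This is precisely what will produce the three-valued hull-rank range $\{l,l+1,l+2\}$ rather than the single value obtained in the previous constructions.

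For part (a), I would first observe, as in the earlier constructions, that $G'=\kappa G_1$ where $G_1=\pi(G')$ is a $(k+1)\times(n+2)$ matrix over $\mathbb{F}_2$, so Theorem~$1$ of \cite{Kushwaha2} guarantees that $D$ is free of rank $k+1$ with residue code generated by $G_1$. The heart of the argument is then the computation of $G_1G_1^T$. Using $\langle u,u\rangle=0$ and $v_i=\langle u,r_i\rangle$, the $(0,0)$ entry is $0$, the $(0,i)$ entries vanish because $v_i+v_i=0$ in $\mathbb{F}_2$, and the $(i,j)$ block becomes $v_iv_j+(GG^T)_{ij}$. This yields the block form
\begin{equation*}
G_1G_1^T=\begin{pmatrix} 0 & \mathbf{0}^T \\ \mathbf{0} & GG^T+vv^T \end{pmatrix},
\end{equation*}
so $\operatorname{rank}(G_1G_1^T)=\operatorname{rank}(GG^T+vv^T)$. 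Invoking the rank inequality $|\operatorname{rank}(A+B)-\operatorname{rank}(A)|\leq \operatorname{rank}(B)$ with $B=vv^T$ of rank one, together with $\operatorname{rank}(GG^T)=k-l$ from Proposition~\ref{Thm4} applied to $C_{Res}$, forces $\operatorname{rank}(G_1G_1^T)\in\{k-l-1,k-l,k-l+1\}$. Applying Proposition~\ref{Thm4} to $D_{Res}$ gives hull-dimension $(k+1)-\operatorname{rank}(G_1G_1^T)\in\{l,l+1,l+2\}$, and Theorem~\ref{thm3} transfers this to the hull-rank of $D$.

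For part (b) I would verify $G_1H_1^T=0$ with $H_1=\pi(H')$ by the same entrywise check: the $(0,0)$ entry uses $\langle u,u\rangle=0$; the $(0,j)$ entries use $w_j=\langle u,s_j\rangle$; the $(i,0)$ entries use $v_i=\langle u,r_i\rangle$; and the $(i,j)$ block collapses to $(GH^T)_{ij}=0$. The first row of $H_1$ is independent of the remaining rows (it alone has a $1$ in the first column), so $\dim\langle H_1\rangle = n-k+1 = (n+2)-(k+1) = \dim(D_{Res})^\perp$, whence $H_1$ is a parity-check matrix of $D_{Res}$ and $H'$ is a parity-check matrix of $D$. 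The main obstacle is establishing the block form of $G_1G_1^T$ correctly and justifying the rank-one perturbation step; once these are in hand, everything else mimics the proofs of \textbf{Constructions I} and \textbf{II} verbatim.
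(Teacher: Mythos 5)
Your proposal is correct and takes essentially the same route as the paper: pass to the residue code, show that $G_1G_1^T$ has the zero first row/column and a lower-right block that is a rank-one perturbation of the Gram matrix of $C_{Res}$, and then combine Proposition~\ref{Thm4} with Theorem~\ref{thm3} to place the hull-rank of $D$ in $\{l,l+1,l+2\}$, with part (b) handled by the same orthogonality-plus-dimension count. The only (cosmetic) difference is that the paper first row-reduces the column $(v_1,\ldots,v_k)^T$ to $(1,0,\ldots,0)^T$ and perturbs $G_2G_2^T$ by $(1,0,\ldots,0)^T(1,0,\ldots,0)$, whereas you work directly with $GG^T+vv^T$ and the inequality $\lvert\operatorname{rank}(A+B)-\operatorname{rank}(A)\rvert\le\operatorname{rank}(B)$, which is an equivalent (and arguably cleaner) formulation of the same rank-one argument.
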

\begin{proof}

\begin{enumerate}
    \item[(a)]
    Follows the form of $G'$, it generates a free $E$-linear code, and its residue code has the generator matrix
   \[
G_1 =
\left(
\begin{array}{cc|ccccc}
1 & 1 &  u_1 &  u_2 &   \cdots \cdots & & u_n \\ \hline
v_1 & 0 & & &  r_1 & & \\
v_1 & 0 & & &   r_2 & & \\
\vdots & \vdots & & &  \vdots & & \\
v_1 & 0 & & & r_k & &
\end{array}
\right)
.\]

Then, applying elementary row operations on $G_1$, we get  
\[
G'_1=
\left(
\begin{array}{cc|ccccc}
1 & 1 &  u_1 & u_2 &   \cdots \cdots & & u_n \\ \hline 
 1 & 0  & & &   & & \\
 0 &  0 & & & G_2  & & \\
\vdots & \vdots  & & & & & \\
0 & 0 & & &   & &
\end{array}
\right)
,\] 
where $\langle G_2 \rangle =\langle G \rangle $. Now, we have 
\[
G'_{1} (G'_{1})^T=
\left(
\begin{array}{ccccccc}
0 & 0 &  0 & 0 &   \cdots \cdots & & 0 \\  
 0 &   & & &   & & \\
 0 &   & & M &  & & \\
\vdots &  & & & & & \\
0 &  & & &   & &
\end{array}
\right)
,\]
where $M=(1,0,\ldots, 0)^T(1,0,\ldots,0)+G_2G_2^T$. We know that if $M_1$ and $M_2$ are two matrices, then $rank(M_1+M_2)\leq rank(M_1)+rank(M_2)$. Therefore,
\begin{align*}
rank(M) & \leq rank((1,0,\ldots, 0)^T(1,0,\ldots,0))+ rank(G_2G_2^T) \\
&\leq 1+rank(G_2G_2^T).
\end{align*}
We observe that $(1,0,\ldots, 0)^T(1,0,\ldots,0)$ affects exclusively the first row of $G_2G_2^T$. Hence, $rank(M)$ can be $rank(G_2G_2^T)$, $rank(G_2G_2^T)-1$ or $rank(G_2G_2^T)+1$. Then, the following cases arise:\\
\justifying \textbf{Case 1:} Let $rank(M)=rank(G_2G_2^T)$. Then
\begin{align*}
    rank(G_1G_1^T) &=rank(G'_1(G'_{1})^T) \\
    &= rank(M) \\
    &= rank(G_2G_2^T) \\
    &=rank(GG^T) \\
    &=k-l
.\end{align*}
Therefore, hull-dimension of $D_{Res}$ is $(k+1)-(k-l)=l+1$. Thus, by Theorem \ref{thm3}, the hull-rank of the code $D$ is $l+1$ in this case.\\
\justifying \textbf{Case 2:} Let $rank(M)=rank(G_2G_2^T)-1$. Then
\begin{align*}
    rank(G_1G_1^T) &=rank(G'_1(G'_{1})^T) \\
    &= rank(M)= rank(G_2G_2^T) -1 \\
    &=rank(GG^T) -1 \\
    &=k-l-1.
    \end{align*}
Therefore, hull-dimension of $D_{Res}$ is $(k+1)-(k-l-1)=l+2$. Thus, by Theorem \ref{thm3}, the hull-rank of the code $D$ is $l+2$ in this case.\\
\justifying \textbf{Case 3:} Let $rank(M)=rank(G_2G_2^T)+1$. Then
\begin{align*}
    rank(G_1G_1^T) &=rank(G'_1(G'_{1})^T)\\
    &= rank(M) \\
    &= rank(G_2G_2^T) +1 \\
    &=rank(GG^T) +1 \\
    &=k-l+1.
\end{align*}
Therefore, $D_{Res}$ has hull-dimension $(k+1)-(k-l+1)=l$. Hence, by Theorem \ref{thm3}, the hull-rank of the code $D$ is $l$ in this case.

\item[(b)] This part follows similar approach as the part $(b)$ of \textbf{Construction I}.

\end{enumerate}
\end{proof}

Now, we have a supporting example for a free $E$-linear [$10,6$]-code with hull-rank $1$ to construct a free [$12,7$]-code over $E$ with hull-rank $3$ by \textbf{Construction III}.

\begin{example}
    Let $C$ be a  free $E$-linear [$10,6$]-code with hull-rank $1$, and $C_{Res}$ has a generator matrix 
\[
G =
\left(
\begin{array}{cccccccccc}
1 & 0 & 0 & 0 &   0 & 0& 1 & 0 & 0 & 1 \\
0 & 1 &0 &0 &  0 & 0 & 0 & 1 & 0 & 1 \\
0 & 0 & 1&0 &0 &0& 1 & 1 & 0 & 1 \\
0 & 0 &0 &1 &  0 &0& 1 & 1 & 1 & 1 \\
0 & 0 &0 &0& 1&  0 & 1 &1& 0 &0 \\
0 & 0 &0 &0& 0&  1&  0 & 0 & 1 &1
\end{array}
\right)
.\]
  If $u=(1,1,1,1,1,1,1,1,1,1)$, then $\langle u, u \rangle =0$, and $v_1=v_2=v_4=v_5=v_6=1, v_3=0$. Now, consider the code $D$ with generator matrix   
  \[
G' =
\left(
\begin{array}{cc|cccccccccc}
\kappa & \kappa & \kappa & \kappa &\kappa &\kappa &\kappa &\kappa & \kappa &\kappa &\kappa &\kappa \\ \hline
\kappa & 0 &  &  &  & &    &&  & &   &  \\
\kappa & 0 & &  & & &   &  & &  &  &  \\
0 & 0 &  &  & &\kappa G & &&  &  &  &   \\
\kappa & 0 & &  & &&  && &  &  &  \\
\kappa & 0 & & & && &   & &  & &  \\
\kappa & 0 &  &  & && &  &   & &  & \ 
\end{array}
\right)
.\]
Then, by \textbf{Construction III}, $D$ is a free [$12,7$]-code over $E$ with hull-rank $3$.

\end{example}

The following result is our fourth build-up construction method.

\begin{theorem}[\textbf{Construction IV}] Let $C$ be a free $E$-linear [$n,k$]-code. Also, let $G$ (with rows $r_1,r_2,\ldots,r_k$) and $H$ (with rows $s_1,s_2,\ldots,s_{n-k}$) be the generator and parity-check matrices of its residue code $C_{Res}$, respectively. Furthermore, fix a vector $u=(u_1, u_2, \ldots, u_n) \in \mathbb{F}_2^n$ with $\langle u, u \rangle =0$, and define the following scalars:

$$v_i=\langle u, r_i \rangle~~ \text{for}~~ 1 \leq i \leq k,  ~~\text{and}~~ w_j=\langle u, s_j \rangle ~~\text{for}~~ 1 \leq j \leq n-k.$$ 
Assume the hull-rank of $C$ is $l$, then 
 
\begin{enumerate}
    \item[(a)] The code $D$ with the generator matrix

    \[
G' =
\left(
\begin{array}{cc|ccccc}
\kappa & 0 & \kappa u_1 & \kappa u_2 &   \cdots \cdots & & \kappa u_n \\ \hline
\kappa v_1 & \kappa v_1 & & &  \kappa r_1 & & \\
\kappa v_2 & \kappa v_2 & & &  \kappa r_2 & & \\
\vdots & \vdots & & &  \vdots & & \\
\kappa v_k & \kappa v_k & & &  \kappa r_k & &
\end{array}
\right)
\]

    is a free $E$-linear [$n+2, k+1$]-code with hull-rank $l$.
    \item[(b)] The code D has the following parity-check matrix: 

\[
H' =
\left(
\begin{array}{cc|ccccc}
0 & \kappa & \kappa u_1 & \kappa u_2 &   \cdots \cdots & & \kappa u_n \\ \hline
\kappa w_1 & \kappa w_1 & & &  \kappa s_1 & & \\
\kappa w_2 & \kappa w_2 & & &  \kappa s_2 & & \\
\vdots & \vdots & & &  \vdots & & \\
\kappa w_{n-k} & \kappa w_{n-k} & & &  \kappa s_{n-k} & &
\end{array}
\right)
.\]
    
\end{enumerate}
\end{theorem}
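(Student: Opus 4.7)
My plan is to follow the template established in Constructions I--III, with the crucial bookkeeping difference hinging on the hypothesis $\langle u,u\rangle = 0$ (as opposed to $\langle u,u\rangle = 1$ in Construction I) combined with the asymmetric first two columns $(\kappa, 0, \ldots)$ of the top row (as opposed to $(\kappa,\kappa,\ldots)$ in Constructions II, III). This combination is what will keep the hull-rank fixed at $l$ rather than increasing it by $1$.

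For part (a), I would first note by inspection of $G'$ that every entry is either $0$ or of the form $\kappa x$ with $x \in \mathbb{F}_2$, and the rows are $\mathbb{F}_2$-linearly independent modulo $J$, so $D$ is free and its residue code $D_{Res}$ has the generator matrix $G_1$ obtained by applying $\pi$ entrywise. The residue matrix $G_1$ has length $n+2$ and exactly $k+1$ rows, which are linearly independent (looking at the final $n$ coordinates, the lower $k$ rows project to the independent rows $r_1,\ldots,r_k$, and the first row is independent from these since its second coordinate is $0$ while any $\mathbb{F}_2$-combination of the lower $k$ rows has equal first and second coordinates). Hence $\dim(D_{Res}) = k+1$, so $D$ has rank $k+1$ and length $n+2$.

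The key computation is $G_1 G_1^T$. Using $v_i = \langle u, r_i\rangle$, the $(1,1)$ entry is $1\cdot 1 + 0\cdot 0 + \langle u,u\rangle = 1$ (this is where $\langle u,u\rangle = 0$ is used, in contrast to Construction I where it would vanish), the $(1, j{+}1)$ and $(i{+}1, 1)$ entries are $v_j + 0 + \langle u, r_j \rangle = 2v_j = 0$, and the $(i{+}1, j{+}1)$ entry is $v_i v_j + v_i v_j + \langle r_i, r_j\rangle = (GG^T)_{ij}$. Therefore
\[
G_1 G_1^T = \begin{pmatrix} 1 & 0 \\ 0 & GG^T \end{pmatrix},
\]
so $\mathrm{rank}(G_1 G_1^T) = 1 + \mathrm{rank}(GG^T)$. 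By Theorem \ref{thm3} and Proposition \ref{Thm4}, $\mathrm{rank}(GG^T) = k - l$, so $\mathrm{rank}(G_1 G_1^T) = k - l + 1$. Applying Proposition \ref{Thm4} to $D_{Res}$ gives hull-dimension $(k+1) - (k - l + 1) = l$, and a second invocation of Theorem \ref{thm3} yields $\mathrm{rank}(Hull(D)) = l$, completing part (a).

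For part (b), write $H_1 = \pi(H')$ and verify directly that $G_1 H_1^T = 0$: the $(1,1)$ entry is $1\cdot 0 + 0\cdot 1 + \langle u, u \rangle = 0$ (again using $\langle u,u\rangle = 0$), the $(1, j{+}1)$ entry is $w_j + 0 + \langle u, s_j\rangle = 0$, the $(i{+}1, 1)$ entry is $0 + v_i + \langle r_i, u\rangle = 0$, and the $(i{+}1, j{+}1)$ entry is $v_i w_j + v_i w_j + (GH^T)_{ij} = 0$. Thus the row space of $H_1$ lies in $(D_{Res})^\perp$. Its dimension is $n-k+1$: the lower $n-k$ rows are independent (look at the last $n$ coordinates coming from the independent rows of $H$), and the top row $(0,1,u_1,\ldots,u_n)$ is independent from them because its first two coordinates are unequal while any combination of the lower rows has equal first two coordinates. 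Since $\dim(D^\perp)_{Res} = (n+2) - (k+1) = n-k+1$ matches, $H_1$ generates $(D_{Res})^\perp$, and so $H' = \kappa H_1$ (a generator matrix of the free code $D^\perp$) is a parity-check matrix of $D$. The only delicate step is the dimension count for $H_1$, which I expect to be the main (minor) obstacle; the orthogonality check $G_1 H_1^T = 0$ is purely mechanical once the sign $\langle u,u\rangle = 0$ is invoked in the $(1,1)$ slot.
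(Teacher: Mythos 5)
Your proposal is correct and follows essentially the same route as the paper: for part (a) you compute $G_1G_1^T=\mathrm{diag}(1,GG^T)$ (using $\langle u,u\rangle=0$ to get the $1$ in the corner), deduce $\mathrm{rank}(G_1G_1^T)=k-l+1$, and apply Proposition \ref{Thm4} together with Theorem \ref{thm3} to get hull-rank $l$; for part (b) you check $G_1H_1^T=0$ and match dimensions, which is exactly the argument the paper delegates to ``part (b) of the previous constructions.'' Your entrywise verifications are merely a more explicit writing-out of the block computations the paper displays without justification.
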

\begin{proof}

\begin{enumerate}
    \item[(a)]
    It is clear that $G'$ generates a free $E$-linear code, and its residue code has the generator matrix
    \[
G_1 =
\left(
\begin{array}{cc|ccccc}
1 & 0 &  u_1 & u_2 &   \cdots \cdots & & u_n \\ \hline
 v_1 &  v_1 & & &   r_1 & & \\
 v_2 &  v_2 & & &   r_2 & & \\
\vdots & \vdots & & &  \vdots & & \\
 v_k & v_k & & &   r_k & &
\end{array}
\right)
.\]
Then
\[
G_1G_1^T=
\left(
\begin{array}{ccccccc}
1 & 0 &  0 & 0 &   \cdots \cdots & & 0 \\  
 0 &   & & &   & & \\
 0 &   & & GG^T&  & & \\
\vdots &  & & & & & \\
0 &  & & &   & &
\end{array}
\right)
.\]
Hence, $rank(G_1G_1^T)=rank(GG^T)+1=k-l+1$. Again, Proposition \ref{Thm4} implies that the hull-dimension of $D_{Res}$ is given by $(k+1)-rank(G_1G_1^T)=(k+1)-(k-l+1)=l$. Therefore, $rank(Hull(D))=dim(Hull(D_{Res}))=l$.

\item[(b)] This part can be proved similar to the part $(b)$ of previous constructions.

\end{enumerate}
\end{proof}

This section concludes with the following example, which illustrates \textbf{Construction IV}.
\begin{example}
  Let $C$ be a free [$10,5$]-code over $E$ with hull-rank $5$, and $C_{Res}$ has a generator matrix 
\[
G =
\left(
\begin{array}{cccccccccc}
1 & 0 & 0 & 0 &   0 & 0& 0 & 1 & 0 & 0\\
0 & 1 &0 &0 &  0 & 0 & 1&0 & 0 & 0 \\
0 & 0 & 1&0 &0 &0& 0 & 0 & 0 & 1 \\
0 & 0 &0 &1 &  0 &0& 0 & 0 & 1 & 0 \\
0 & 0 &0 &0& 1&  1 & 0 &0& 0 &0 
\end{array}
\right)
.\]
  If we take $u=(1,0,1,1,0,1,0,0,0,0)$, then $\langle u, u \rangle =0$, and $v_1=v_3=v_4=v_5=v_6=1, v_2=0$. Now, consider the code $D$ with generator matrix   
  \[
G' =
\left(
\begin{array}{cc|cccccccccc}
\kappa & 0 & \kappa & 0 &\kappa &\kappa &0 &\kappa & 0 &0 &0 &0 \\ \hline
\kappa & \kappa &  &  &  &  &    & &  &  & &    \\
0 & 0 &  &  & & &   &  & & &  &  \\
\kappa & \kappa &  & & & &\kappa G & &  &  &  &   \\
\kappa & \kappa & &  & & &   & &  &  &  &  \\
\kappa & \kappa & &  & & & &   &  &  & &  \\

\end{array}
\right)
.\]
Then, by \textbf{Construction IV}, $D$ is a free [$12,6$]-code over $E$ with hull-rank $5$.

\end{example}

\section{Classification}
This section investigates the permutation equivalence of free $E$-linear codes and classifies the optimal free $E$-linear codes with a fixed hull-rank. We also study the hull-variation problem for the free $E$-linear codes.

\begin{definition}[Permutation-equivalent codes] A code $C$ is called permutation-equivalent to a code $D$ if $C$ can be obtained from $D$ by a suitable coordinate permutation.
 \end{definition}
 
The following result from \cite{Alah23} investigates the permutation equivalence of two free $E$-linear codes. 

\begin{theorem}( \cite{Alah23}, Theorem $16$)\label{Thm5}
  Two free $E$-linear codes $C$ and $D$ are permutation-equivalent if and only if $C_{Res}$ and $D_{Res}$  are permutation-equivalent. 
\end{theorem}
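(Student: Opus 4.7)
The plan is to prove both implications by transferring a single coordinate permutation back and forth between $E^n$ and $\mathbb{F}_2^n$, using the fact that a free $E$-linear code is completely determined by its residue code.

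For the forward implication, I would observe that the reduction map $\pi:E^n\to\mathbb{F}_2^n$ acts coordinate-wise and therefore commutes with any coordinate permutation $\sigma\in S_n$. Thus if $\sigma(C)=D$, then
$$\sigma(C_{Res})=\sigma(\pi(C))=\pi(\sigma(C))=\pi(D)=D_{Res},$$
so the very same $\sigma$ witnesses the permutation equivalence of the residue codes.

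For the reverse implication, the crucial input is Theorem $1$ of \cite{Kushwaha2}, already invoked in the proof of Theorem \ref{thm1a}: if $C$ is a free $E$-linear code and $G_1$ is a generator matrix of $C_{Res}$, then $\kappa G_1$ is a generator matrix of $C$. In particular, a free $E$-linear code is uniquely determined by its residue code. Starting from a permutation $\sigma$ with $\sigma(C_{Res})=D_{Res}$, I would pick $G_1$ generating $C_{Res}$, so that applying $\sigma$ to the columns yields a matrix $\sigma(G_1)$ generating $D_{Res}$. Since coordinate permutations commute with left multiplication by the scalar $\kappa$, we have $\kappa\sigma(G_1)=\sigma(\kappa G_1)$. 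By the cited theorem this matrix generates $D$; on the other hand it generates $\sigma(C)$, since $\kappa G_1$ generates $C$ and applying $\sigma$ to every codeword is the same as applying $\sigma$ to the rows of its generator matrix. Therefore $\sigma(C)=D$.

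The argument is essentially bookkeeping, and the only point worth checking carefully is that coordinate permutations preserve the combined $E$-span and $\mathbb{F}_2$-span used to define a generating set for an $E$-linear code; this is immediate because both spans are formed by coordinate-wise operations. The genuine non-triviality is concentrated in the lifting step, which is the content of the result from \cite{Kushwaha2} and encodes the rigidity of free $E$-linear codes under the non-unital structure.
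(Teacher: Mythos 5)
The paper does not prove this statement at all: it is imported verbatim as Theorem 16 of \cite{Alah23}, so there is no in-paper argument to compare yours against. Judged on its own, your proof is correct. The forward direction is exactly the observation that $\pi$ is applied coordinate-wise and hence commutes with any coordinate permutation (and indeed needs no freeness hypothesis). The reverse direction correctly isolates the one substantive ingredient, namely the lifting result (Theorem 1 of \cite{Kushwaha2}, already used in the proof of Theorem \ref{thm1a}) that a free $E$-linear code is recovered from any generator matrix $G_1$ of its residue code as $\langle \kappa G_1\rangle$; applying this to $D$ with the permuted matrix $\sigma(G_1)$, and noting $\kappa\sigma(G_1)=\sigma(\kappa G_1)$, gives $\sigma(C)=D$. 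You are also right to flag, and to dispose of, the only delicate point in this non-unital setting: that a coordinate permutation preserves the combined $E$-span and additive $\mathbb{F}_2$-span through which generating sets are defined, and that $\sigma(C)$ is again free so the lifting theorem applies on both sides. In effect you have supplied a self-contained proof of a result the paper only cites, modulo the external lifting theorem.
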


To study the hull-variation problem for the free $E$-linear codes, we have the following.
\begin{theorem}\label{Thm6}
    Let $C$ be a free $E$-linear [$n,k$]-code and $\alpha$ be a permutation. Then 
    \begin{enumerate}
        \item $\langle \alpha(w), \alpha(z) \rangle=\langle w, z \rangle$ for all $w,z \in E^n$,
        \item $\alpha(C^\perp)=\alpha(C)^\perp$.
    \end{enumerate}
\end{theorem}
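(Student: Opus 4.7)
The plan is to unwind the definition of the coordinate permutation, prove part (1) by a straightforward reindexing, and then derive part (2) as a direct consequence by combining part (1) with the definition $C^\perp = C^{\perp_L}\cap C^{\perp_R}$.

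For part (1), I would denote by $\alpha$ the permutation on $\{1,\ldots,n\}$ and let it act on $E^n$ by $\alpha(w)=(w_{\alpha^{-1}(1)},w_{\alpha^{-1}(2)},\ldots,w_{\alpha^{-1}(n)})$. By definition of the inner product,
$$\langle \alpha(w),\alpha(z)\rangle=\sum_{j=1}^{n} w_{\alpha^{-1}(j)}\,z_{\alpha^{-1}(j)}.$$
Setting $i=\alpha^{-1}(j)$, which ranges over $\{1,\ldots,n\}$ as $j$ does, the sum becomes $\sum_{i=1}^{n} w_i z_i = \langle w,z\rangle$. The key observation is that, although $E$ is non-commutative, each summand retains the same left-right order of the components of $w$ and $z$, so non-commutativity plays no role here.

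For part (2), I would prove the two inclusions $\alpha(C^\perp)\subseteq \alpha(C)^\perp$ and $\alpha(C)^\perp\subseteq \alpha(C^\perp)$. For the first, take $y=\alpha(x)$ with $x\in C^\perp$ and any $z=\alpha(w)\in\alpha(C)$. By part (1), $\langle y,z\rangle=\langle x,w\rangle=0$ because $x\in C^{\perp_L}$, and $\langle z,y\rangle=\langle w,x\rangle=0$ because $x\in C^{\perp_R}$; hence $y\in\alpha(C)^{\perp_L}\cap\alpha(C)^{\perp_R}=\alpha(C)^\perp$. For the reverse inclusion, given $y\in\alpha(C)^\perp$, I would set $x=\alpha^{-1}(y)$ and, for every $w\in C$, apply part (1) in the reverse direction to obtain $\langle x,w\rangle=\langle\alpha(x),\alpha(w)\rangle=\langle y,\alpha(w)\rangle=0$, with the symmetric computation showing $\langle w,x\rangle=0$; thus $x\in C^\perp$ and $y\in\alpha(C^\perp)$.

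I expect no genuine obstacle here; the statement is essentially formal once part (1) is in hand. The only bookkeeping point demanding care is the non-commutativity of $E$, which forces us to verify the left and right orthogonality conditions separately rather than collapse them into a single symmetric check. Note that the freeness and the dimension parameters $[n,k]$ play no role in the argument — it goes through for any $E$-submodule of $E^n$ — but the statement is phrased this way to match the focus of the section on free $E$-linear codes.
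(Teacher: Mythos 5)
Your proof is correct, and part (1) matches the paper's argument (up to the harmless choice of indexing convention $w_{\alpha^{-1}(j)}$ versus $w_{\alpha(j)}$). In part (2) you diverge from the paper in a way worth noting: the paper proves only the inclusion $\alpha(C^\perp)\subseteq\alpha(C)^\perp$ element-wise and then forces equality by a rank count, $rank(\alpha(C^\perp))=rank(C^\perp)=n-rank(C)=rank(\alpha(C)^\perp)$, which is where the freeness and the $[n,k]$ hypothesis actually get used. You instead prove the reverse inclusion directly by pulling $y\in\alpha(C)^\perp$ back to $x=\alpha^{-1}(y)$ and applying part (1), which is more elementary, avoids the rank formula $rank(C^\perp)=n-rank(C)$ entirely, and --- as you correctly observe --- establishes the statement for an arbitrary $E$-submodule of $E^n$, not just a free one. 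Your version is also more careful on a small point: since $C^\perp=C^{\perp_L}\cap C^{\perp_R}$ and $E$ is noncommutative, membership in $\alpha(C)^\perp$ requires verifying both $\langle y,z\rangle=0$ and $\langle z,y\rangle=0$; you check both, whereas the paper's displayed computation only records the right-dual condition. So your route buys generality and a cleaner logical structure at no extra cost.
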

\begin{proof}
   \begin{enumerate}
       \item Let $w=(w_1,w_2, \ldots, w_n), z=(z_1,z_2, \ldots, z_n) \in E^n$ and $\alpha$ be a permutation. Then
       $$\alpha(w)=(w_{\alpha(1)}, w_{\alpha(2)}, \ldots, w_{\alpha(n)})~~\text{and}~~~ \alpha(z)=(z_{\alpha(1)}, z_{\alpha(2)}, \ldots, z_{\alpha(n)}).$$
       Therefore, 
       $$\langle \alpha(w), \alpha(z) \rangle=\sum_{i=1}^n w_{\alpha(i)}z_{\alpha(i)}=\sum_{i=1}^n w_iz_i=\langle w, z \rangle .$$
       \item Let $w \in C^\perp$ and $\alpha$ be a permutation. Then $\alpha(w) \in \alpha(C^\perp)$. Now, for any $z\in \alpha(C)$, there exists a codeword $x \in C$ such that $\alpha(x)=z$. Hence, by the first part, we have 
       $$\langle z, \alpha(w) \rangle=\langle \alpha(x), \alpha(w) \rangle = \langle x,w  \rangle = 0.$$
       Therefore, $\alpha(w) \in \alpha(C)^\perp$, and hence $\alpha(C^\perp) \subseteq (\alpha(C))^\perp$. Further, for the free [$n,k$]-code $C$ over $E$, 
       $$rank(\alpha(C^\perp))=rank(C^\perp)=n-rank(C)=n-rank(\alpha(C))=rank(\alpha(C)^\perp).$$
       This implies that $\alpha(C^\perp)=\alpha(C)^\perp$.
   \end{enumerate} 
\end{proof}

We now investigate the hull-variation problem for the free $E$-linear codes.

\begin{theorem}
    The hull-ranks of any two permutation-equivalent free $E$-linear codes are identical.
\end{theorem}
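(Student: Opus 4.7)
The plan is to reduce the statement, via the freeness-based dictionary we have already built, to the analogous well-known fact for $\mathbb{F}_2$-linear codes, and then transport it back. So let $C$ and $D$ be two permutation-equivalent free $E$-linear codes, with $D=\alpha(C)$ for some coordinate permutation $\alpha$. By Theorem \ref{thm3}, since both $C$ and $D$ are free, we have
\[
rank(Hull(C))=\dim Hull(C_{Res})\quad\text{and}\quad rank(Hull(D))=\dim Hull(D_{Res}),
\]
so the entire claim reduces to showing $\dim Hull(C_{Res})=\dim Hull(D_{Res})$.

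For this reduction I first invoke Theorem \ref{Thm5}, which tells us that because $C$ and $D$ are permutation-equivalent free $E$-linear codes, their residue codes $C_{Res}$ and $D_{Res}$ are permutation-equivalent $\mathbb{F}_2$-linear codes; concretely, the same $\alpha$ does the job, so $D_{Res}=\alpha(C_{Res})$. Then I appeal to Theorem \ref{Thm6}: part (1) shows that the inner product over $\mathbb{F}_2$ is preserved by $\alpha$, and part (2) (whose proof carries over verbatim to $\mathbb{F}_2$-linear codes, since the dimension-counting step only uses the identity $\dim C+\dim C^\perp=n$ that also holds over $\mathbb{F}_2$) yields $\alpha(C_{Res}^\perp)=(\alpha(C_{Res}))^\perp=D_{Res}^\perp$.

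Combining these, I compute
\[
\alpha(Hull(C_{Res}))=\alpha(C_{Res}\cap C_{Res}^\perp)=\alpha(C_{Res})\cap\alpha(C_{Res}^\perp)=D_{Res}\cap D_{Res}^\perp=Hull(D_{Res}),
\]
and since $\alpha$ is a bijection on coordinates and hence an $\mathbb{F}_2$-linear isomorphism from $Hull(C_{Res})$ to $Hull(D_{Res})$, the two $\mathbb{F}_2$-dimensions agree. Chaining equalities with the opening reduction then gives $rank(Hull(C))=rank(Hull(D))$.

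The only subtlety I anticipate is that Theorem \ref{Thm6} is stated for free $E$-linear codes rather than for $\mathbb{F}_2$-linear codes, so I will need to briefly note that its proof uses only the coordinate-wise formula for $\alpha$ and the standard duality dimension count, both of which are available verbatim in the $\mathbb{F}_2$ setting. Everything else is a straightforward chain of identities, so no computation-heavy obstacle arises.
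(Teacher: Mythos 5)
Your proof is correct, but it takes a different route from the paper's. The paper argues entirely at the level of $E^n$: it applies Theorem \ref{Thm6} directly to get $\alpha(C^\perp)=\alpha(C)^\perp$, hence $\alpha(Hull(C))=\alpha(C)\cap\alpha(C)^\perp=Hull(\alpha(C))$, and concludes $rank(Hull(\alpha(C)))=rank(Hull(C))$ without ever descending to the residue codes. You instead push everything down to $\mathbb{F}_2$ first, via Theorem \ref{thm3} and the observation that $D_{Res}=\alpha(C_{Res})$ (which, as you implicitly use, follows from $\pi$ acting coordinate-wise, so Theorem \ref{Thm5} is not strictly needed), and then run the same ``$\alpha$ commutes with duals, hence with hulls'' argument over the field. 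The core idea is identical in both proofs; the trade-off is that the paper's version is shorter and uses Theorem \ref{Thm6} exactly as stated, whereas yours incurs the extra burden you yourself flag --- re-justifying the analogue of Theorem \ref{Thm6}(2) for $\mathbb{F}_2$-linear codes --- in exchange for resting on the completely classical field-level fact and for making the link to the rank formula $rank(Hull(C))=\dim Hull(C_{Res})$ explicit. Either argument is acceptable; no gap.
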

\begin{proof}
  Let $C$ be a free $E$-linear code and $\alpha$ be an arbitrary permutation. Then, Theorem \ref{Thm6} implies that 
$$\alpha(Hull(C))=\alpha(C \cap C^\perp)= \alpha(C) \cap \alpha(C^\perp)= \alpha(C) \cap \alpha(C)^\perp = Hull(\alpha(C)).$$
Therefore, $$rank(Hull(\alpha(C)))=rank(\alpha(Hull(C)))=rank(Hull(C)).$$
  %This completes the proof.
\end{proof}

%In what follows, we utilize the notion of optimal codes in a way consistent with the approach in \cite{Li}.
\begin{definition}[$l_i$-optimal free code]\label{defn2}
    A free $E$-linear [$n,k$]-code $C$ with hull-rank $l=i$ is called an $l_i$-optimal code if the minimum distance of the code $C$ is maximum among all the free [$n,k$]-codes over $E$ with hull-rank $i$.
\end{definition}

Here, we classify optimal free $E$-linear codes with a fixed hull-rank. Since LCD $E$-codes are free $E$-linear codes whose hull has rank $0$, and their classification is already provided in \cite{Kushwaha2}. Therefore, we classify only the free $E$-linear codes with non-zero hull-ranks. We rely on \cite{Li} for binary linear codes with a fixed hull-dimension and use Theorem $1$ of \cite{Kushwaha2} to get all the free $E$-linear codes, and then use Theorem \ref{Thm5} to find all the permutation-inequivalent free linear codes over $E$. Further, Theorem \ref{thm3} helps to find their hull-ranks. Later, we use the Definition \ref{defn2} to determine their optimality and list them in Tables \ref{Tab2}, \ref{Tab3}, \ref{Tab4}, \ref{Tab5}, \ref{Tab6}, \ref{Tab7}, \ref{Tab8a}, \ref{Tab8}, \ref{Tab9}, \ref{Tab9a}, \ref{Tab10}, \ref{Tab10a} and \ref{Tab11}. In the tables, $I_m$ is the identity matrix of order $m$. Here, all computations have been carried out using MAGMA \cite{Magma}.

\begin{table}[ht]
\centering
\begin{minipage}{0.4\textwidth}
\centering
\caption{\label{Tab2}$l_1$-optimal free codes (Part I).}
\begin{tabular}{|c|c|}
\hline
Generator Matrix & $[n,k,d]$ \\ \hline
$\begin{pmatrix}
    \kappa & \kappa
    \end{pmatrix}$ & $[2, 1, 2]$  \\ 
$\begin{pmatrix}
\kappa & \kappa & 0
\end{pmatrix}$ & $[3, 1, 2]$ \\ 
$\begin{pmatrix} 
\kappa & 0 & \kappa \\ 
0 & \kappa & 0 
\end{pmatrix}$ & $[3, 2, 1]$ \\ 
$\begin{pmatrix}
\kappa & \kappa & \kappa
\end{pmatrix}$ & $[4, 1, 4]$ \\ 
$\begin{pmatrix} 
\kappa & 0 & 0 & 0 \\ 
0 & \kappa & \kappa & 0 
\end{pmatrix}$ & $[4, 2, 1]$ \\ 
$\begin{pmatrix} 
 & \kappa \\
 \kappa I_3  & \kappa \\ 
   & \kappa 
\end{pmatrix}$ & $[4, 3, 2]$  \\ 
$\begin{pmatrix} 
\kappa & \kappa & \kappa & 0 & \kappa 
\end{pmatrix}$ & $[5, 1, 4]$  \\ 
$\begin{pmatrix} 
\kappa & 0 & 0 & \kappa & \kappa \\ 
0 & \kappa & \kappa & \kappa & 0 
\end{pmatrix}$ & $[5, 2, 3]$ \\

$\begin{pmatrix} 

 & 0 & \kappa  \\
 \kappa I_3  & \kappa & \kappa \\ 
   & \kappa & \kappa
\end{pmatrix}$ & $[5, 3, 2]$ \\ 

$\begin{pmatrix} 
 & 0 & \kappa  \\
 \kappa I_3  & \kappa & 0\\ 
   & 0 & \kappa
\end{pmatrix}$ & $[5, 3, 2]$ \\ 

$\begin{pmatrix} 
 & 0 & \kappa  \\
 \kappa I_3  & 0 & \kappa\\ 
   & 0 & \kappa
\end{pmatrix}$ & $[5, 3, 2]$  \\ 

$\begin{pmatrix} 
   & \kappa  \\
 \kappa I_4    & 0\\ 
  & 0 \\
 & 0
\end{pmatrix}$ & $[5, 4, 1]$  \\ 

$\begin{pmatrix} 
   & \kappa  \\
 \kappa I_4    & \kappa\\ 
     & \kappa \\
 & 0
\end{pmatrix}$ & $[5, 4, 1]$ \\ 

$\begin{pmatrix} 
 \kappa & \kappa & \kappa & \kappa    & \kappa
    & \kappa 
\end{pmatrix}$ & $[6, 1, 6]$ \\ 

$\begin{pmatrix} 
 \kappa & 0 & \kappa & 0    & 0
    & \kappa \\
     0 & \kappa & \kappa & \kappa    & 0
    & 0
\end{pmatrix}$ & $[6, 2, 3]$ \\ 

$\begin{pmatrix} 
   & 0 & 0 &\kappa  \\
 \kappa I_3  & 0 &0  & \kappa\\ 
   &0 & 0 & \kappa 
\end{pmatrix}$ & $[6, 3, 2]$ \\ 

$\begin{pmatrix} 
   & \kappa & 0 &\kappa  \\
 \kappa I_3  & 0 &0  & \kappa\\ 
     &0 & \kappa & \kappa 
\end{pmatrix}$ & $[6, 3, 2]$ \\ 

$\begin{pmatrix} 
   & 0 & \kappa &\kappa  \\
 \kappa I_3  & 0 &0  & \kappa\\ 
     &0 & \kappa & \kappa 
\end{pmatrix}$ & $[6, 3, 2]$ \\ 

$\begin{pmatrix} 
   & \kappa & 0 &0  \\
 \kappa I_3  & 0 &0  & \kappa\\ 
     &0 & 0 & \kappa 
\end{pmatrix}$ & $[6, 3, 2]$ \\ 

$\begin{pmatrix} 
   & \kappa & \kappa &\kappa  \\
 \kappa I_3  & 0 &0  & \kappa\\ 
     &0 & 0 & \kappa 
\end{pmatrix}$ & $[6, 3, 2]$\\

 \hline
\end{tabular}
\end{minipage}
\hspace{0.05\textwidth}
\begin{minipage}{0.45\textwidth}
\centering
\caption{\label{Tab3} $l_1$-optimal free codes (Part II).}
\begin{tabular}{|c|c|}
\hline

Generator Matrix & [$n,k,d$] \\
\hline

$\begin{pmatrix} 
   & \kappa & 0 &0  \\
 \kappa I_3  & 0 &0  & \kappa\\ 
     &0 & \kappa & \kappa 
\end{pmatrix}$ & $[6, 3, 2]$ \\ 

$\begin{pmatrix} 
   & 0 & \kappa &0  \\
 \kappa I_3  & 0 &0  & \kappa\\ 
     &\kappa & \kappa & \kappa 
\end{pmatrix}$ & $[6, 3, 2]$ \\

$\begin{pmatrix}
     & 0 & \kappa \\
     \kappa I_4 & \kappa & 0 \\
     & \kappa & \kappa \\
     & 0 & 0 
 \end{pmatrix}$ & [$6,4,1$] \\

$\begin{pmatrix}
     & 0 & \kappa \\
     \kappa I_4 & 0 & \kappa \\
     & 0 & 0 \\
     & 0 & \kappa 
 \end{pmatrix}$ & [$6,4,1$] \\

 $\begin{pmatrix}
     & 0 & \kappa \\
     \kappa I_4 & 0 & 0 \\
     & 0 & 0 \\
     & 0 & 0 
 \end{pmatrix}$ & [$6,4,1$] \\
 $\begin{pmatrix}
     & 0 & \kappa \\
     \kappa I_4 & \kappa & 0 \\
     & \kappa & 0 \\
     & 0 & 0 
 \end{pmatrix}$ & [$6,4,1$] \\
 $\begin{pmatrix}
     & \kappa \\
     & \kappa \\
     \kappa I_5 & \kappa \\
     & \kappa \\
     & \kappa
 \end{pmatrix}$ & [$6,5,2$] \\

$\begin{pmatrix}
     \kappa & \kappa & \kappa & \kappa & 0 & \kappa & \kappa
 \end{pmatrix}$ & [$7,1,6$] \\
 $\begin{pmatrix}
     \kappa & 0 & \kappa &  \kappa & 0 & \kappa & \kappa \\
     0 & \kappa & 0 & \kappa & \kappa & \kappa & \kappa
 \end{pmatrix}$ & [$7,2,4$] \\

 $\begin{pmatrix}
     & \kappa & \kappa & 0 & \kappa \\
     \kappa I_3 & \kappa & \kappa & 0 & 0 \\
     & \kappa & 0 & \kappa & \kappa
 \end{pmatrix}$ & [$7,3,3$] \\

 $\begin{pmatrix}
     & 0 & \kappa & 0 \\
     \kappa I_4 & \kappa & 0 & \kappa \\
     & 0 & \kappa & 0 \\
     & 0 & \kappa & 0
 \end{pmatrix}$ & [$7,4,2$] \\

 $\begin{pmatrix}
     & 0 & \kappa & 0 \\
     \kappa I_4 & \kappa & 0 & \kappa \\
     & 0 & \kappa & 0 \\
     & 0 & \kappa & \kappa
 \end{pmatrix}$ & [$7,4,2$] \\

 $\begin{pmatrix}
     & \kappa & \kappa & 0 \\
     \kappa I_4 & \kappa & 0 & \kappa \\
     & 0 & \kappa & 0 \\
     & 0 & \kappa & 0
 \end{pmatrix}$ & [$7,4,2$] \\

 $\begin{pmatrix}
     & \kappa & \kappa & 0 \\
     \kappa I_4 & \kappa & 0 & \kappa \\
     & \kappa & 0 & 0 \\
     & \kappa & 0 & 0
 \end{pmatrix}$ & [$7,4,2$] \\

\hline
\end{tabular}
\end{minipage}
\end{table}

\begin{table}[ht]
\centering
\begin{minipage}{0.4\textwidth}
\centering
\caption{\label{Tab4} $l_1$-optimal free codes (Part III).}

\begin{tabular}{|c|c|}
\hline
Generator Matrix & $[n,k,d]$ \\ \hline
  
$\begin{pmatrix}
     & 0 & \kappa & \kappa \\
     \kappa I_4 & \kappa & \kappa & \kappa \\
     & 0 & \kappa & 0 \\
     & \kappa & \kappa & 0
 \end{pmatrix}$ & [$7,4,2$] \\

  $\begin{pmatrix}
     & 0 & \kappa & 0 \\
     \kappa I_4 & \kappa & 0 & \kappa \\
     & 0 & 0 &\kappa \\
     & 0 & 0 &\kappa
 \end{pmatrix}$ & [$7,4,2$] \\ 

$\begin{pmatrix}
      & 0 & \kappa \\
      & 0 & \kappa \\ 
      \kappa I_5 & 0 & \kappa \\
      & 0 & \kappa \\ 
      & 0 & \kappa \\
  \end{pmatrix}$ &[$7,5,2$] \\
  $\begin{pmatrix}
      & \kappa & 0 \\
      & 0 & \kappa \\ 
      \kappa I_5 & \kappa & 0 \\
      & \kappa & 0 \\ 
      & 0 & \kappa \\
  \end{pmatrix}$ &[$7,5,2$] \\

   $\begin{pmatrix}
      & \kappa & 0 \\
      & 0 & \kappa \\ 
      \kappa I_5 & 0 & \kappa \\
      & 0 & \kappa \\ 
      & 0 & \kappa \\
  \end{pmatrix}$ &[$7,5,2$] \\

  $\begin{pmatrix}
      & \kappa & \kappa \\
      & 0 & \kappa \\ 
      \kappa I_5 & \kappa & \kappa \\
      & 0 & \kappa \\ 
      & 0 & \kappa \\
  \end{pmatrix}$ &[$7,5,2$] \\

   $\begin{pmatrix}
      & \kappa & \kappa \\
      & \kappa & 0 \\ 
      \kappa I_5 & \kappa & \kappa \\
      & 0 & \kappa \\ 
      & 0 & \kappa \\
  \end{pmatrix}$ &[$7,5,2$] \\

  $\begin{pmatrix}
      & 0 \\
      & 0 \\ 
     \kappa I_6 & 0 \\
      & 0 \\
      & 0 \\
      & \kappa
  \end{pmatrix}$ & [$7,6,1$] \\

  $\begin{pmatrix}
      & \kappa \\
      & \kappa \\ 
     \kappa I_6 &\kappa \\
      & \kappa \\
      & \kappa \\
      & 0
  \end{pmatrix}$ & [$7,6,1$] \\

\hline
\end{tabular}

\end{minipage}
\hspace{0.05\textwidth}
\begin{minipage}{0.45\textwidth}
\centering
\caption{\label{Tab5}  $l_1$-optimal free codes (Part IV).}

\begin{tabular}{|c|c|}
\hline

Generator Matrix & [$n,k,d$] \\
\hline

$\begin{pmatrix}
      & \kappa \\
      & 0 \\ 
     \kappa I_6 &\kappa \\
      & 0 \\
      & \kappa \\
      & 0
  \end{pmatrix}$ & [$7,6,1$] \\

$\begin{pmatrix}
    \kappa & \kappa & \kappa &  \kappa & \kappa & \kappa & \kappa & \kappa &
\end{pmatrix}$ & [$8,1,8$] \\
$\begin{pmatrix}
    \kappa & 0 & \kappa & 0 & \kappa & \kappa & 0 & 0 \\
    0 & \kappa & \kappa & \kappa & 0 & \kappa & 0 & \kappa
\end{pmatrix}$ & [$8,2,4$] \\

$\begin{pmatrix}
    & \kappa & \kappa & \kappa & 0 & 0 \\
    \kappa I_3 & \kappa & \kappa & 0 & \kappa & 0 \\
    & 0 & \kappa & \kappa & 0 & \kappa 
\end{pmatrix}$ & [$8,3,4$] \\

$\begin{pmatrix}
    & \kappa & 0 & 0 & \kappa \\
    \kappa I_4 & \kappa & \kappa & 0 & 0 \\
    & \kappa & 0 & 0 & \kappa \\
    & \kappa & 0 & \kappa & 0 
\end{pmatrix}$ & [$8,4,3$] \\ 
$\begin{pmatrix}
    & \kappa & 0 & \kappa \\
    & 0 & \kappa & \kappa \\
    \kappa I_5 & 0 & 0 & \kappa \\
    & 0 & 0 & \kappa \\ 
    & 0 & 0 & \kappa 
\end{pmatrix}$ & [$8,5,2$] \\

$\begin{pmatrix}
    & \kappa & 0 & \kappa \\
    & 0 & \kappa & 0\\
    \kappa I_5 & 0 & \kappa & 0 \\
    & 0 & \kappa & \kappa \\ 
    & 0 & 0 & \kappa 
\end{pmatrix}$ & [$8,5,2$] \\

$\begin{pmatrix}
    & 0 &\kappa & \kappa \\
    & \kappa & 0 & \kappa \\
    \kappa I_5 & 0 & \kappa & \kappa \\
    & 0 & \kappa & 0 \\ 
    & 0 & 0 & \kappa 
\end{pmatrix}$ & [$8,5,2$] \\

$\begin{pmatrix}
    & \kappa & 0 & 0 \\
    & 0 & 0 & \kappa \\
    \kappa I_5 & 0 & \kappa& 0 \\
    & 0 & \kappa& 0 \\ 
    & 0 & 0 & \kappa 
\end{pmatrix}$ & [$8,5,2$] \\
$\begin{pmatrix}
    & \kappa & 0 & 0 \\
    & \kappa & \kappa & \kappa \\
    \kappa I_5 & 0 & \kappa& 0 \\
    & 0 & \kappa& 0 \\ 
    & 0 & 0 & \kappa 
\end{pmatrix}$ & [$8,5,2$] \\

$\begin{pmatrix}
    & \kappa & 0 & \kappa \\
    & 0 &\kappa & 0 \\
    \kappa I_5 & 0 & 0& \kappa \\
    & \kappa & \kappa& \kappa \\ 
    & \kappa & \kappa & 0 
\end{pmatrix}$ & [$8,5,2$] \\

\hline
\end{tabular}
\end{minipage}
\end{table}

\begin{table}[ht]
\centering
\begin{minipage}{0.4\textwidth}
\centering
\caption{\label{Tab6} $l_1$-optimal free codes (Part V).}
\begin{tabular}{|c|c|}
\hline
Generator Matrix & $[n,k,d]$ \\ \hline

$\begin{pmatrix}
    & 0 & \kappa & \kappa \\
    & 0 & \kappa & 0 \\
    \kappa I_5 & 0 & \kappa& \kappa \\
    & 0 & 0& \kappa \\ 
    & 0 & 0 & \kappa 
\end{pmatrix}$ & [$8,5,2$] \\

$\begin{pmatrix}
    & \kappa & 0 & 0 \\
    & 0 & \kappa & \kappa \\
    \kappa I_5 & 0 & \kappa& 0\\
    & 0 & \kappa& 0 \\ 
    & 0 & 0 & \kappa 
\end{pmatrix}$ & [$8,5,2$] \\
$\begin{pmatrix}
    & 0 & \kappa & \kappa \\
    & 0 & 0 & \kappa \\
    \kappa I_5 & 0 & \kappa& \kappa \\
    & 0 & 0& \kappa \\ 
    & 0 & 0 & \kappa 
\end{pmatrix}$ & [$8,5,2$] \\

$\begin{pmatrix}
    & \kappa & \kappa & 0 \\
    &  \kappa & 0 & \kappa \\
    \kappa I_5  & \kappa& 0 & 0 \\
    &  \kappa & \kappa & 0\\ 
    & 0 & 0 & \kappa 
\end{pmatrix}$ & [$8,5,2$] \\

$\begin{pmatrix}
    & \kappa & \kappa & 0 \\
     & 0 & \kappa & \kappa \\
    \kappa I_5  & \kappa& 0 & 0 \\
    &  \kappa & \kappa & 0\\ 
    & 0 & 0 & \kappa 
\end{pmatrix}$ & [$8,5,2$] \\

$\begin{pmatrix}
    & 0 & \kappa & \kappa  \\
     & 0 & 0 & \kappa \\
    \kappa I_5  & \kappa& 0 & 0 \\
    &  \kappa & 0 & 0\\ 
    & 0 & \kappa & \kappa 
\end{pmatrix}$ & [$8,5,2$] \\ 

$\begin{pmatrix}
    & \kappa & \kappa & \kappa  \\
     & 0  & \kappa & 0 \\
    \kappa I_5  & \kappa& 0 & 0 \\
    &   0 & \kappa & 0\\ 
    & 0 & 0 & \kappa 
\end{pmatrix}$ & [$8,5,2$] \\

$\begin{pmatrix}
    & 0 & \kappa & 0  \\
     & \kappa & 0 & \kappa \\
    \kappa I_5  & 0& \kappa & 0 \\
    &  0& \kappa & 0\\ 
    & 0 & 0 & \kappa 
\end{pmatrix}$ & [$8,5,2$] \\

$\begin{pmatrix}
    & 0 & \kappa & 0  \\
     & 0 & 0 & \kappa \\
    \kappa I_5  & 0& \kappa & 0 \\
    &  0& \kappa & 0\\ 
    & 0 & 0 & \kappa 
\end{pmatrix}$ & [$8,5,2$] \\

\hline
\end{tabular}

\end{minipage}
\hspace{0.05\textwidth}
\begin{minipage}{0.45\textwidth}
\centering
\caption{\label{Tab7} $l_1$-optimal free codes (Part VI).}
\begin{tabular}{|c|c|}
\hline
Generator Matrix & $[n,k,d]$ \\ \hline
  
  $\begin{pmatrix}
    & 0 & 0 & \kappa  \\
     & 0 & 0 & \kappa \\
    \kappa I_5  & 0& 0 & \kappa \\
    &  0& 0 & \kappa \\ 
    & 0 & 0 & \kappa 
\end{pmatrix}$ & [$8,5,2$] \\ 

$\begin{pmatrix}
    & 0 & 0 & \kappa  \\
     & \kappa & \kappa & \kappa \\
    \kappa I_5  & 0& 0 & \kappa \\
    &  0& 0 & \kappa \\ 
    & 0 & 0 & \kappa 
\end{pmatrix}$ & [$8,5,2$] \\

  $\begin{pmatrix}
    & 0 & \kappa & 0  \\
     & \kappa & 0 & \kappa \\
    \kappa I_5  & 0& 0 & \kappa \\
    &  0& 0 & \kappa \\ 
    & 0 & 0 & \kappa 
\end{pmatrix}$ & [$8,5,2$] \\
  
  $\begin{pmatrix}
    & 0 & \kappa & 0  \\
     & 0 & 0 & \kappa \\
    \kappa I_5  & 0& 0 & \kappa \\
    &  0& 0 & \kappa \\ 
    & 0 & 0 & \kappa 
\end{pmatrix}$ & [$8,5,2$] \\
$\begin{pmatrix}
    & 0 & \kappa & 0  \\
     & \kappa & \kappa & \kappa \\
    \kappa I_5  & 0& 0 & \kappa \\
    &  0& 0 & \kappa \\ 
    & 0 & 0 & \kappa 
\end{pmatrix}$ & [$8,5,2$] \\  

$\begin{pmatrix}
    & 0 & \kappa \\
    & \kappa & \kappa \\
    \kappa I_6 & 0 & 0 \\
    & 0 & 0 \\ 
    & 0 & 0 \\ 
    & \kappa & \kappa 
\end{pmatrix}$ & [$8,6,1$] \\
$\begin{pmatrix}
    & 0 & \kappa \\
    & 0 & 0 \\
    \kappa I_6 & 0 & 0 \\
    & 0 & 0 \\ 
    & 0 & 0 \\ 
    & 0 & 0 
\end{pmatrix}$ & [$8,6,1$] \\
$\begin{pmatrix}
    & 0 & \kappa \\
    & 0 & \kappa \\
    \kappa I_6 & 0 & \kappa \\
    & 0 & \kappa \\ 
    & 0 &\kappa \\ 
    & 0 & 0 
\end{pmatrix}$ & [$8,6,1$] \\

\hline
\end{tabular}
\end{minipage}
\end{table}

 \begin{table}[ht]
\centering
\begin{minipage}{0.4\textwidth}
\centering
\caption{\label{Tab8a} $l_1$-optimal free codes (Part VII).}
\begin{tabular}{|c|c|}
\hline
Generator Matrix & $[n,k,d]$ \\ \hline
$\begin{pmatrix}
    & \kappa & 0 \\
    & 0 & \kappa \\
    \kappa I_6 & 0 & \kappa \\
    & 0 & \kappa \\ 
    & 0 &0 \\ 
    & 0 & \kappa 
\end{pmatrix}$ & [$8,6,1$] \\
$\begin{pmatrix}
    & 0 & \kappa \\
    & \kappa & \kappa \\
    \kappa I_6 & 0 & 0 \\
    & 0 & \kappa \\ 
    & 0 & \kappa \\ 
    & \kappa & 0 
\end{pmatrix}$ & [$8,6,1$] \\

$\begin{pmatrix}
    & 0 & \kappa \\
    & 0 & \kappa \\
    \kappa I_6 & 0 & 0 \\
    & 0 & \kappa \\ 
    & 0 &0 \\ 
    & 0 & 0 
\end{pmatrix}$ & [$8,6,1$] \\

$\begin{pmatrix}
    & \kappa &\kappa \\
    & 0 & \kappa \\
    \kappa I_6 & \kappa & 0 \\
    & 0 & \kappa \\ 
    & \kappa &\kappa \\ 
    & 0 & 0 
\end{pmatrix}$ & [$8,6,1$] \\

$\begin{pmatrix}
    & 0 & \kappa \\
    & \kappa & 0 \\
    \kappa I_6 & 0 & \kappa \\
    & 0 & \kappa \\ 
    & 0 &0 \\ 
    & \kappa & 0 
\end{pmatrix}$ & [$8,6,1$] \\

$\begin{pmatrix}
    & \kappa & 0 \\
    & 0 & \kappa \\
    \kappa I_6 & 0 & \kappa \\
    & 0 & 0 \\ 
    & 0 &\kappa \\ 
    & 0 & 0 
\end{pmatrix}$ & [$8,6,1$] \\

$\begin{pmatrix}
    & \kappa \\
    & \kappa \\ 
    & \kappa \\
    \kappa I_6& \kappa \\
    & \kappa \\
    & \kappa \\ 
    & \kappa \\
\end{pmatrix}$ & [$8,7,2$] \\
\hline

\end{tabular}

\end{minipage}
\hspace{0.05\textwidth}
\begin{minipage}{0.45\textwidth}
\centering
\caption{\label{Tab8} $l_2$-optimal free codes (Part I).}
\begin{tabular}{|c|c|}
\hline
Generator Matrix & $[n,k,d]$ \\ \hline

$\begin{pmatrix}
    \kappa & 0 & \kappa & 0 \\
    0 & \kappa & 0 & \kappa 
\end{pmatrix}$ & [$4,2,2$]  \\ 
$\begin{pmatrix}
    \kappa & 0 & 0 & \kappa & 0 \\
    0 & \kappa & \kappa & 0 & 0 
\end{pmatrix}$ & [$5,2,2$] \\ 

$\begin{pmatrix}
      & 0 & \kappa \\
     \kappa I_3 & \kappa & 0 \\
     &  0 & 0
\end{pmatrix}$ & [$5,3,1$] \\ 

$\begin{pmatrix}
    \kappa & 0 & \kappa & 0 & \kappa & \kappa \\
    0 & \kappa & \kappa & \kappa & 0 & \kappa
\end{pmatrix}$ & [$6,2,4$] \\ 

$\begin{pmatrix}
    & \kappa & \kappa & \kappa \\
    \kappa I_3 & \kappa & 0 & \kappa \\
    & 0 & \kappa & \kappa 
\end{pmatrix}$ & [$6,3,3$] \\

$\begin{pmatrix}
    & 0 & \kappa \\
    \kappa I_4 & \kappa & 0 \\
    & \kappa & 0 \\
    & \kappa & 0
\end{pmatrix}$ & [$6,4,2$]\\ 

$\begin{pmatrix}
    & \kappa & 0 \\
    \kappa I_4 & \kappa & \kappa \\
    & 0 & \kappa \\
    & \kappa & \kappa
\end{pmatrix}$ & [$6,4,2$] \\ 

$\begin{pmatrix}
    \kappa & 0 & 0 & \kappa & \kappa & 0 & \kappa \\
    0 & \kappa & \kappa & \kappa & \kappa & 0 & 0 
\end{pmatrix}$ & [$7,2,4$] \\

$\begin{pmatrix}
    & 0 & \kappa & 0 & \kappa \\
    \kappa I_3 & 0 & \kappa & \kappa & \kappa \\
    & 0 & \kappa & \kappa & 0
\end{pmatrix}$ & [$7,3,3$] \\

$\begin{pmatrix}
    & 0 & \kappa & 0 & \kappa \\
    \kappa I_3 & \kappa & \kappa & 0 & 0 \\
    & 0 & \kappa & \kappa & 0
\end{pmatrix}$ & [$7,3,3$] \\

$\begin{pmatrix}
    & 0 & 0 & \kappa \\
    \kappa I_4 & 0 & \kappa & \kappa \\
    & 0 & \kappa & \kappa \\
    & 0 & \kappa & 0 
\end{pmatrix}$ & [$7,4,2$] \\

$\begin{pmatrix}
    & 0 & 0 & \kappa \\
    \kappa I_4 & 0 & \kappa & 0 \\
    & \kappa & 0 & 0 \\
    & \kappa & 0 & 0 
\end{pmatrix}$ & [$7,4,2$] \\ 

$\begin{pmatrix}
    & 0 & 0 & \kappa \\
    \kappa I_4 & \kappa & \kappa & \kappa \\
    & 0 & \kappa & 0 \\
    & \kappa & \kappa &\kappa 
\end{pmatrix}$ & [$7,4,2$] \\ 

$\begin{pmatrix}
    & 0 & 0 & \kappa \\
    \kappa I_4 & 0 & 0 & 0 \\
    & \kappa & \kappa & 0 \\
    & \kappa & \kappa & 0 
\end{pmatrix}$ & [$7,4,2$] \\

$\begin{pmatrix}
    & 0 & 0 & \kappa \\
    \kappa I_4 & 0 & \kappa & 0 \\
    & 0 & 0 & \kappa \\
     & 0 & 0 & \kappa 
\end{pmatrix}$ & [$7,4,2$] \\

\hline
\end{tabular}

\end{minipage}
\end{table}

 \begin{table}[ht]
\centering
\begin{minipage}{0.4\textwidth}
\centering
\caption{\label{Tab9}  $l_2$-optimal free codes (Part II).}
\begin{tabular}{|c|c|}
\hline
Generator Matrix & [$n,k,d$] \\
\hline

$\begin{pmatrix}
    & 0 & \kappa \\
    & \kappa & 0 \\
    \kappa I_5 & \kappa & 0 \\
    & \kappa & 0 \\
    & 0 & 0
\end{pmatrix}$ & [$7,5,1$] \\
$\begin{pmatrix}
    & 0 & \kappa \\
    & \kappa & 0 \\
    \kappa I_5 & 0 & 0 \\
    & 0 & 0 \\
    & 0 & 0
\end{pmatrix}$ & [$7,5,1$] \\

$\begin{pmatrix}
    & 0 & \kappa \\
    & \kappa & \kappa \\
    \kappa I_5 & \kappa & \kappa \\
    & \kappa & 0 \\
    & 0 & 0
\end{pmatrix}$ & [$7,5,1$] \\
$\begin{pmatrix}
    \kappa & 0 & 0 & \kappa & \kappa & 0 & \kappa & 0 \\
    0 & \kappa & \kappa & \kappa & \kappa & 0 & 0 & 0
\end{pmatrix}$ & [$8,2,4$] \\

$\begin{pmatrix}
    \kappa & 0 & 0 & \kappa & \kappa & \kappa & \kappa & \kappa \\
    0 & \kappa & \kappa & \kappa & \kappa & 0 & 0 & 0
\end{pmatrix}$ & [$8,2,4$] \\

$\begin{pmatrix}
    \kappa & 0 & 0 & 0 & 0 & \kappa & \kappa & \kappa \\
    0 & \kappa & \kappa & \kappa & \kappa & 0 & 0 & 0
\end{pmatrix}$ & [$8,2,4$] \\

$\begin{pmatrix}
    & \kappa & \kappa & 0 & \kappa & 0\\
    \kappa I_3 & \kappa & 0 & \kappa & \kappa & \kappa \\
    & 0 & \kappa & \kappa & \kappa & \kappa
\end{pmatrix}$ & [$8,3,4$] \\

$\begin{pmatrix}
    & \kappa & \kappa & \kappa& 0 \\
    \kappa I_4 & \kappa & \kappa & 0 & \kappa \\
    & 0 & \kappa & \kappa & \kappa \\
   & \kappa & \kappa & 0 & 0 \\
\end{pmatrix}$ & [$8,4,3$] \\

$\begin{pmatrix}
    & 0 & \kappa & \kappa& \kappa \\
    \kappa I_4 & 0 & \kappa & 0 & \kappa \\
    & \kappa & \kappa & \kappa & 0 \\
     & 0 & 0 & \kappa & \kappa\\
\end{pmatrix}$ & [$8,4,3$] \\ 

$\begin{pmatrix}
    & 0 & \kappa & \kappa \\
    & \kappa & 0 & \kappa \\
    \kappa I_5 & 0 & 0 & \kappa \\
    & \kappa & \kappa & 0 \\
    & 0 & 0 & \kappa 
\end{pmatrix}$ & [$8,5,2$] \\

$\begin{pmatrix}
    & 0 & \kappa & \kappa \\
    & \kappa  & \kappa & 0 \\
    \kappa I_5 & \kappa & \kappa & \kappa \\
    & \kappa  & 0 & \kappa\\
    & 0 & 0 & \kappa 
\end{pmatrix}$ & [$8,5,2$] \\

$\begin{pmatrix}
    & 0 & \kappa\\
    & \kappa & \kappa \\
   \kappa I_6 & \kappa & \kappa \\
    &\kappa & 0 \\
    & 0 & \kappa \\
    & 0 & \kappa
\end{pmatrix}$ & [$8,6,2$] \\

\hline
\end{tabular}

\end{minipage}
\hspace{0.12\textwidth}
\begin{minipage}{0.45\textwidth}

\caption{\label{Tab9a}  $l_2$-optimal free codes (Part III).}
\begin{tabular}{|c|c|}
\hline
Generator Matrix & [$n,k,d$] \\
\hline

\small$\begin{pmatrix}
    & 0 & \kappa\\
    & \kappa & 0 \\
   \kappa I_6 & \kappa & 0 \\
    &\kappa & 0 \\
    & 0 & \kappa \\
    & 0 & \kappa
\end{pmatrix}$ & [$8,6,2$] \\

\small$\begin{pmatrix}
    & 0 & \kappa\\
    & \kappa & \kappa \\
   \kappa I_6 & 0 & \kappa \\
    &0 & \kappa \\
    & 0 & \kappa \\
    & 0 & \kappa
\end{pmatrix}$ & [$8,6,2$] \\

\hline
\end{tabular}
\vspace{1.5cm}
\centering
\caption{\label{Tab10} $l_3$-optimal free codes (Part I).}
\begin{tabular}{|c|c|}
\hline
Generator Matrix & $[n,k,d]$ \\ \hline
$\begin{pmatrix}
      & 0 & 0 & \kappa \\
      \kappa I_3 & \kappa & 0 & 0\\
      & 0 & \kappa & 0
  \end{pmatrix}$ & [$6,3,2$] \\
  
$\begin{pmatrix}
    & 0 & \kappa & \kappa & \kappa \\
    \kappa I_3 & \kappa & \kappa & 0 & \kappa \\
    & \kappa & 0 & \kappa & \kappa
\end{pmatrix}$ &  [$7,3,4$]  \\ 

$\begin{pmatrix}
    & \kappa & \kappa & \kappa \\
    \kappa I_4 & 0 & \kappa & \kappa \\
    & \kappa & 0 & \kappa \\
    & \kappa & \kappa & 0
\end{pmatrix}$ & [$7,4,3$] \\

$\begin{pmatrix}
    & \kappa & 0 & \kappa & \kappa & 0\\
   \kappa I_3 & \kappa & \kappa & \kappa & 0 & 0 \\
    & \kappa & 0 & \kappa & 0 & \kappa 
\end{pmatrix}$ & [$8,3,4$] \\

$\begin{pmatrix}
    & \kappa & \kappa & 0 & 0 & \kappa \\
   \kappa I_3 & \kappa & 0 & \kappa & 0 & \kappa \\
    & \kappa & \kappa & \kappa & 0 & 0 
\end{pmatrix}$ & [$8,3,4$] \\

$\begin{pmatrix}
  & \kappa & 0 & 0 & \kappa \\
  \kappa I_4 & 0 & \kappa & 0 & \kappa \\
  & \kappa & \kappa & 0 & 0 \\
  & \kappa & \kappa & 0 & \kappa
\end{pmatrix}$ & [$8,4,3$] \\

\hline
\end{tabular}

\end{minipage}
\end{table}

\begin{table}[ht]
\centering
\begin{minipage}{0.4\textwidth}
\centering
\caption{\label{Tab10a} $l_3$-optimal free codes (Part II).}
\begin{tabular}{|c|c|}
\hline
Generator Matrix & [$n,k,d$] \\
\hline
$\begin{pmatrix}
    & 0 & \kappa & 0 \\
    & 0 & 0 & \kappa \\
    \kappa I_5 & \kappa & 0 & 0 \\
    & 0 & \kappa & 0 \\
    & 0 & \kappa & 0
\end{pmatrix}$ & [$8,5,2$] \\

$\begin{pmatrix}
    & \kappa & \kappa & 0 \\
    & 0 & 0 & \kappa \\
    \kappa I_5 & \kappa & 0 & 0 \\
    & \kappa & \kappa & 0 \\
    & 0 & \kappa & 0
\end{pmatrix}$ & [$8,5,2$] \\

$\begin{pmatrix}
    & 0 & \kappa & 0 \\
    & \kappa &\kappa & \kappa \\
    \kappa I_5 & 0 & 0 & \kappa \\
     & \kappa & 0 & 0 \\
    & \kappa & \kappa & \kappa
\end{pmatrix}$ & [$8,5,2$] \\

\hline
\end{tabular}

\end{minipage}
\hspace{0.12\textwidth}
\begin{minipage}{0.45\textwidth}
\centering
\caption{\label{Tab11}  $l_4$-optimal free codes.}
\begin{tabular}{|c|c|}
\hline
Generator Matrix & [$n,k,d$] \\
\hline

$\begin{pmatrix}
    & \kappa & \kappa & 0 & \kappa \\
    \kappa I_4 & 0 & \kappa & \kappa & \kappa \\
    & \kappa & \kappa & \kappa & 0 \\
    & \kappa & 0 & \kappa & \kappa 
\end{pmatrix}$ & [$8,4,4$] \\
\hline
\end{tabular}
\end{minipage}
\end{table}

\section{Conclusion}
In this paper, we have studied hulls of free linear codes over a non-unital ring $E$. Initially, we have focused on residue and torsion codes of various hulls and found the generator matrix of the hull of a free $E$-linear code. Then, four build-up construction methods have been given for constructing free $E$-linear codes with a larger length and hull-rank from free $E$-linear codes with a smaller length and hull-rank. Some examples of codes constructed by these build-up construction methods are also given. Later, we studied the permutation equivalence of two free $E$-linear codes and then discussed the hull-variation problem. Finally, we have classified optimal free $E$-linear codes under permutation equivalence for lengths up to $8$. We would also like to point out that extending this work to other non-unital rings, as appeared in the classification of Fine \cite{Fine93}, can be an important direction for future research.

\section*{Acknowledgement}
The first author gratefully acknowledges financial support from the Council of Scientific \& Industrial Research, Govt. of India (under grant No. 09/1023(16098)/2022-EMR-I).
%Also, the authors would like to thank the anonymous referee(s) and the Editor of this journal for their valuable comments to improve the presentation of the paper.
\section*{Declarations}
\textbf{Competing interests}: All the authors confirm that there is no competing interests related to this manuscript.\\
\textbf{Statement on Data Availability}: The authors confirm that this manuscript encompasses all the data used to support the findings of this study. For any essential clarifications, requests can be directed to the corresponding author. \\
\textbf{Use of AI tools}: No Artificial Intelligence (AI) tools were utilized in the writing or preparation of this manuscript.

\end{document}